\pgfplotsset{width=10cm,compat=1.9}
\pgfplotsset{compat=1.18}
\newtheorem{definition}{Definition}
\newtheorem{theorem}{Theorem}
\newtheorem{lemma}{Lemma}
\newcommand{\commit}[1]{\text{Com}\left(#1\right)}
\title{Verifiable Weighted Secret Sharing}
\begin{document}

\author{
    \IEEEauthorblockN{Kareem Shehata \orcidlink{0009-0004-7396-7236}}
    \IEEEauthorblockA{School of Computing \\
        National University of Singapore \\
        kareem@comp.nus.edu.sg
    }
    \thanks{This research is supported by Crystal Centre, National University of Singapore.}
    \and
    \IEEEauthorblockN{Han Fangqi \orcidlink{0009-0003-1895-8647}}
    \IEEEauthorblockA{School of Computing \\
        National University of Singapore \\
        fhan4@comp.nus.edu.sg
    }
    \and

    \IEEEauthorblockN{Sri AravindaKrishnan Thyagarajan \orcidlink{0000-0003-0114-7672}}
    \IEEEauthorblockA{School of Computer Science \\
        University of Sydney \\
        aravind.thyagarajan@sydney.edu.au
    }
}

\IEEEoverridecommandlockouts
\makeatletter\def\@IEEEpubidpullup{6.5\baselineskip}\makeatother
\IEEEpubid{\parbox{\columnwidth}{
    Crypto Valley Conference (CVC) 2025\\
    5-6 June 2025, Rotkreuz, Switzerland\\
    ISBN 978-1-XXXX-XXXX-1\\
    https://dx.doi.org/10.14722/cvc.2024.23xxx\\
    https://www.cryptovalleyconference.com/
}
\hspace{\columnsep}\makebox[\columnwidth]{}}

\maketitle

\pagenumbering{gobble}
\begin{abstract}
Traditionally, threshold secret sharing (TSS) schemes assume all
parties have equal weight, yet emerging systems like blockchains
reveal disparities in party trustworthiness, such as stake or
reputation. Weighted Secret Sharing (WSS) addresses this by assigning
varying weights to parties, ensuring security even if adversaries
control parties with total weight at most a threshold $t$. Current
WSS schemes assume honest dealers, resulting in security from only
honest-but-curious behaviour but not protection from malicious
adversaries for downstream applications. \emph{Verifiable} secret
sharing (VSS) is a well-known technique to address this, but existing
VSS schemes are either tailored to TSS, or
require additional trust assumptions. We propose the first efficient
verifiable WSS scheme that tolerates malicious dealers and is
compatible with the latest CRT-based WSS~\cite{crypto_w_weights}. Our
solution uses Bulletproofs for efficient verification and introduces
new privacy-preserving techniques for proving relations between
committed values, which may be of independent interest. Evaluation on
Ethereum show up to a $100\times$ improvement in communication
complexity compared to the current design and $20\times$ improvement
compared to unweighted VSS schemes.
\end{abstract}
\section{Introduction}
\label{sec:intro}

Secret sharing is a fundamental cryptographic building block. In a
secret-sharing scheme, a secret is distributed among a set of nodes
such that only authorized subsets can cooperate to recover it. Many
popular distributed protocols are built on secret sharing, including
multi-party computation, threshold encryption, threshold signatures,
and distributed randomness generation. The most well-known
secret-sharing scheme is Shamir's Secret Sharing (SSS)~\cite{sss},
which ensures that any set of corrupted parties up to a threshold $t$
cannot learn the secret. A key assumption implicit in SSS and other
threshold schemes is that all participating parties are of equal
importance, trust, or weight.

In many recent applications, the notion of equal weight fails to
capture real-world characteristics. For instance, in Proof-of-Stake
(PoS) blockchains~\cite{ethereum-pos}, each party is associated with
a stake that can vary significantly. Similarly, in Oracle
networks~\cite{oraclenets}, servers are treated according to their
respective reputation scores and an event is considered certified
only if it is endorsed by a set of servers with enough total
reputation score. Side-chains are another example where each party is
associated with a ``deposit'' of the main chain's currency, and the
amount of the deposit determines the party's voting power. These
scenarios, among many others, highlight the necessity for schemes in
which the adversary is modeled not by the number of individual nodes
it corrupts, but by the total weight of the compromised nodes.

The traditional solution is to ``virtualize'' parties, replacing a
party of weight $w$ with $w$ virtual parties of weight $1$, and using
an unweighted scheme such as SSS. This leads to communication and
computation costs of $O(w \cdot \lambda)$ per party, which grows
quickly in large systems with significant inequalities.

In \cite{crypto_w_weights}, Garg et al. propose an efficient weighted
secret sharing scheme based on the Chinese Remainder Theorem (CRT)
with costs growing linearly with a party's weight $w$, eliminating
the $\lambda$ factor. They describe a protocol with a trusted dealer
and explore applications such as multi-party computation (MPC),
threshold decryption, and threshold signatures. However, they only
consider an honest dealer, consequently their applications can only
tolerate semi-honest adversaries, i.e.\ parties are assumed to follow
the protocol specification, but may try to learn additional
information.

Verifiable secret sharing (VSS)~\cite{vss} allows shareholders to
verify the integrity of the deal, which is crucial when the dealer
cannot be fully trusted, or when using VSS as a subprotocol. As
stated in \cite{crypto_w_weights}, upgrading their CRT-based scheme
to VSS would enhance security to withstand malicious adversaries.
While VSS is well-studied for linear schemes such as SSS, the only
known CRT-based VSS~\cite{kaya_crt_ss} requires large unknown-order
groups, relies on the strong RSA assumption, and needs a trusted
setup. Since CRT-based secret sharing is non-linear, the techniques
used by SSS-based VSS schemes cannot be applied and achieving an
efficient CRT-based VSS without unknown-order groups has been
unclear.

Therefore, we ask the question, 
\begin{quote}
    {\em Can we construct an efficient CRT-based verifiable weighted secret sharing scheme without trusted setup, and assuming only a single group in which the discrete logarithm problem is hard?}
\end{quote}

In this paper, we answer the above question in the affirmative by enhancing the CRT-based weighted secret sharing scheme of \cite{crypto_w_weights} with verifiability of the shares in an efficient manner. We work with standard prime order elliptic curve groups (assuming the hardness of the discrete logarithm problem) which are widely used in real-world applications, particularly in the case of blockchain systems where weighted secret sharing has become prominent.

\subsection{Our Contributions}
\label{sec:contrib}

Our results in this work can be summarized as follows.

\smallskip\noindent\emph{Verifiable Weighted Ramp Secret Sharing.}
We construct the first efficient non-interactive CRT-based verifiable
weighted secret sharing scheme. Our scheme relies on prime
order groups in which the discrete logarithm problem is assumed to be
hard, such as an elliptic curve group, and does not require a trusted
setup. Our verifiability technique applies to \emph{any}
CRT-based secret sharing scheme, including the threshold construction
of \cite{asmuth_bloom} and the weighted-ramp\footnote{Ramp secret
sharing has two threshold parameters: $t$ and $T$. Parties with
aggregate weight above $T$ can reconstruct the correct secret, while
parties with aggregate weight below $t$ learn no information about
the secret.} construction of \cite{crypto_w_weights}. The scheme is
efficient, with communication cost scaling logarithmically in all
parameters.

\smallskip\noindent\emph{Proof-of-Mod.} The technical core of our construction is a
novel non-interactive zero-knowledge (NIZK) argument to prove that
two committed values $v, s$ satisfy $v = s \bmod p$, where $p$ is a
known prime. While this is straightforward for values within a small
subset of the commitment group (e.g.\ 64-bit values within a 256-bit
elliptic curve group), because of the wrap-around effect it was not
clear how to do this efficiently for arbitrary values until our work.
Further, we extend our approach to values that are much larger than
the group order, using prime-order
decomposition. The argument is efficient in that proof size is
logarithmic in the size of the inputs and may be of independent
interest.

\smallskip\noindent\emph{Practical evaluation.} We evaluate a case
study of Ethereum Staking and find that our scheme can reduce
bandwidth by up to a $20\times$ factor compared to unweighted
schemes. We also provide a proof-of-concept implementation and
confirm that proof sizes are logarithmic, under $2$ KiB for Ethereum.
Running times scale linearly with the number of parties and the size of the secret, but with large constants. With a more
optimized arithmetic circuit proof, our technique immediately becomes
practical.
\section{Overview}
\label{sec:overview}

In this section, we give an overview of our methods and the required
background while avoiding technical details. Please see
Sections~\ref{sec:proof-of-mod} and \ref{sec:crt-vss} for full
details.

\subsection{CRT-based Secret Sharing}
\label{sec:crt-ss}

Let us begin with a high-level description of Chinese Remainder
Theorem-based secret sharing~\cite{asmuth_bloom,mignotte}. We assume
a trusted dealer is given a secret $s_0 \in \mathbb{F}$, where
$\mathbb{F}$ is a field of size $p_0$. To share the secret among $n$
parties, we first choose $n$ numbers, $p_1, \ldots, p_n$, all
co-prime to each other and $p_0$. The choice of $p_i$ values depends
on the desired properties of the secret sharing scheme, which we
discuss later. For a parameter $L$, the dealer chooses uniformly
random $a \xleftarrow{\$} [L]$ and sets: \[ s = s_0 + a p_0 \le (L +
1) p_0 \]

We call $s$ the lifted secret. The dealer then calculates the share
for party $i$ as $s_i = s \mod p_i$, and sends each party their share
privately. Consider some set of parties $A \subseteq [n]$. Let $P_A =
\prod_{i \in A} p_i$. By the Chinese Remainder Theorem, the system of
equations $\left\{ s_i = s' \mod p_i \right\}_{i \in A}$ has a unique
solution modulo $P_A$, which can be solved to obtain $s'$. If $P_A >
(L + 1) p_0$, then $s' = s$, and we can find $s_0 = s' \mod p_0$. Let
$\mathcal{A}$ be the set of all access sets authorized to reconstruct
the secret. For reconstruction, we require that:
\[
  (L + 1) p_0 \le \min_{A \in \mathcal{A}} P_A = P_{MIN}
\]

Let $\bar{\mathcal{A}}$ be the set of all \emph{unauthorized} access
sets, define:
\[
  P_{MAX} = \max_{\bar{A} \in \bar{\mathcal{A}}} \prod_{i \in \bar{A}} p_i
\]

It can be shown that the statistical distance between the solution to
the CRT equations and uniformly random is at most
$P_{\bar{A}} / L$ (see~\cite{crypto_w_weights}), thus the security
error of our scheme is upper bounded by $P_{MAX} / L$. Our desired
property for both reconstruction and privacy of the system can thus
be succinctly stated as:

\[
  P_{MAX} << L < P_{MIN} / p_0 - 1
\]

\smallskip\noindent\textbf{Extension to Weighted Ramp Secret Sharing.}
\label{sec:wrss-overview}
The construction of a weighted secret-sharing scheme from the CRT
construction of the previous subsection is brilliant in its
simplicity. Let us assume that each party has an associated weight
$w_i$, and that there are two thresholds: a reconstruction threshold
$T$ and a privacy threshold $t$. A set of parties $A$ is authorized
if $\sum_{i \in A} w_i \ge T$, and unauthorized if $\sum_{i \in A}
w_i < t$. Notice that weights in between these two thresholds are
neither authorized nor unauthorized. We pick each prime $p_i$ such
that it has bit length $w_i$, i.e.\ $p_i$ is less than but close to
$2^{w_i}$. As a result, $P_{MAX} < 2^t$ and $P_{MIN} > 2^{T - O(1)}$.
Let $\lambda$ be our security parameter, and set $p_0$ to have bit
length $\lambda$. If we set $L = 2^{\lambda + t}$, then we achieve
security error $2^{-\lambda}$. Our system is correct if $T - t > 2
\lambda + O(1)$. If the gap is insufficient, we can amplify all of
the weights by a constant $c$. See~\cite{crypto_w_weights} for more a
more detailed analysis.

\subsection{CRT-based Verifiable Secret Sharing}
\label{sec:crt-vss-overview}

In addition to providing secret shares, in a VSS the dealer must also
prove to the participants that the deal was correctly executed
(see~\ref{sec:vss_defn} for a formal definition). For CRT-based
secret sharing schemes, the obvious approach is to provide
commitments to the secret and the shares, and then prove in
zero-knoweldge that the committed share values are congruent to the
original secret modulo the relevant prime. The participants verify
the proofs, and that their share is consistent with the given
commitments. If any of these checks fail they reject the deal.
Otherwise, they use CRT reconstruction to obtain the secret when
needed.

The main challenge is finding an appropriate commitment scheme and
proof technique. Using group-based commitments (e.g.\ Pedersen's) the
naïve approach leverages the homomorphic property: the prover
provides a commitment $K = \commit{k_i; r_i'}$ such that $s = s_i +
k_i p_i$, and then the verifier checks that $S = S_i \oplus (p_i
\otimes K_i)$, where $\oplus$ is the commitment group operation, and
$\otimes$ is the scalar operation. This technique works if $s$ is in
a small subset of the message space, but for arbitrary values, any
group-based commitment scheme faces a wraparound issue. Specifically,
if the commitment group has order $p_0$, then the equation above is
not over the integers but rather $\mathbb{Z}_{p_0}$. The dealer can
always find $k'$ such that $s = v + k' p \mod p_0$ for any values $v,
s$ since $\gcd(p, p_0) = 1$. As proven in
Appendix~\ref{sec:commit_wrap}, this problem affects all fixed-size
homomorphic commitments.

Unstructured commitment schemes (such as hash-based schemes) avoid
this problem, but there are currently no known efficient arithmetic
circuit proofs for such schemes. Thus, for arbitrary field values, it
is not clear how to efficiently construct a secure NIZK proof.
\footnote{One could always use Succinct Non-interactive Arguments
(SNARGs) for general circuits, but we believe a tailored NIZK proof
would easily outperform the SNARK approach both in prover and
verifier time.}


\subsection{Proof-of-Mod}
\label{sec:pom-overview}

The fundamental problem in creating a CRT-based VSS is proving that
committed values satisfy $v = s \bmod p$ for a known prime $p$
without revealing $v$ or $s$. Here we give an overview of our
"proof-of-mod" (PoM) protocol which achieves this goal, see
Section~\ref{sec:proof-of-mod} for full details.

\paragraph{Base PoM} To begin, suppose $s < p_0$ (i.e.\ $s$ is within
the order of the commitment group), and let $s = v + k p$, with $p_0
= q p + t$ where $0 \le t < p$ (i.e. $q$ is the quotient of $p_0 / p$
and $t$ is the remainder). In order to prove that $v = s \mod p$, it
suffices to show (1) that $0 \le v < p$ and (2) $0 \le k < q$. The
only other possibility is the last $t$ values that $s$ can take, in
which case we instead need to show (1) $0 \le v < t$ and (2) $0 \le k
\le q$. Our key observation is that the PoM boils down to a disjunction of two
sets of range proofs, each of which can be done within the group order:
\begin{align}
  \begin{split}
  \left[
    \left(0 \le v < p\right) \wedge
      \left(0 \le k < q\right)
  \right] \\
  \vee \left[
    \left(0 \le v < t\right) \wedge
      \left(0 \le k \le q\right)
  \right]
\end{split}
\label{eqn:range_dis}
\end{align}

\paragraph{Disjunction of Range Proofs} To understand how to prove a
disjunction of ranges, we first recap the range-proof technique
from Bulletproofs~\cite{bulletproofs}. To prove that $0 \le v_i < 2^n$,
where $v_i$ is the committed value, the prover shows that it knows the
$n$-bit binary decomposition of $v_i$, by showing that it knows values
$a_j, b_j$ that satisfy the following equations (see
Appendix~\ref{sec:range_eqns} for a proof):

\begin{align}
\label{eqn:range_ckt_1}
\sum_{j = 1}^{n} a_{i,j} \cdot 2^{j-1} &= v_i \\
\label{eqn:range_ckt_2}
a_{i,j} \cdot b_{i,j} &= 0 &1 \le j \le n \\
\label{eqn:range_ckt_3}
a_{i,j} - b_{i,j} - 1 &= c_{i,j} = 0 & 1 \le j \le n
\end{align}

\noindent Notice that if the range holds, then
(\ref{eqn:range_ckt_3}) is always zero. If the range does \emph{not}
hold, then the prover can always find $a_j, b_j$ that satisfy
(\ref{eqn:range_ckt_1}) and (\ref{eqn:range_ckt_2}) but not
(\ref{eqn:range_ckt_3}). Thus, to prove a disjunction of $0 < v_1 <
2^n$ or $0 < v_2 < 2^n$, we can form a polynomial for each $v_i$ that
evaluates to zero if $ \forall j, c_{i,j} = 0$, and non-zero with high
probability otherwise. Let $z$ be a random value chosen by the
verifier. This produces the following equations:
\begin{align}
\label{eqn:overview_c_poly}
\sum_{j = 1}^{n} c_{1,j} \cdot z^{j-1} &= c'_1, \ 
&\sum_{j = 1}^{n} c_{2,j} \cdot z^{j-1} = c'_2
\end{align}

If the disjunction is true then it must be the case that the product
of the two is zero, i.e., $c_1' \cdot c_2' = 0$. If neither range
holds, then with overwhelming probability the prover will not be able
to fulfil the proof for a random challenge $z$. We refer the reader to
Section~\ref{sec:rp-disj} for more details.

\paragraph{Prime-order decomposition}
For our CRT-based VSS to be useful, we require $s$ to be larger than
$p_0$. To that end, we use the technique of ``prime-order
decomposition'' which allows us to extend the base PoM above to
arbitrarily large values. Consider the case that $s = s_0 + a p_0$
for some $a \in \mathbb{N}$, where $s_0 < p_0$. We now wish to prove
that $v = s \bmod p$. Notice that we can now decompose the problem as
follows:
\begin{align*}
v &= (s_0 + a p_0) \bmod p \\
  &= (s_0 \bmod p) + (a \bmod p) (p_0 \bmod p)  
\end{align*}

If we assume that ${p}^2 + p < p_0$ then we can decompose the problem
into three smaller proofs for relations: (1) $(s_0' = s_0 \bmod p)$, (2) $(a' = a
\bmod p)$, and (3) $v = s_0' + a' t \bmod p$, where $t = p_0 \bmod p$.
The first two are simply applications of the base PoM, while the
third is an arithmetic circuit followed by PoM. We can similarly
extend this technique to any value polynomial in $p_0$.
For full details see Section~\ref{sec:full_mod}.

\paragraph{Efficiency}
\label{sec:eff-overview}

Proof sizes for the base PoM, extended PoM, and VSS are shown in
Table~\ref{tab:proof_efficiency}, assuming $p_0 < 2^\lambda$, $n$
parties, and lifted secrets less than ${p_0}^m$. The base proof-of-mod
is discussed in more detail in Section~\ref{sec:proof-of-mod}, while
the VSS proof is discussed in Section~\ref{sec:crt-vss}. The only
additional communication in the protocol is the $n+1$ commitments
$S_0, \ldots, S_n$, and $2n$ field elements $s_1, \ldots, s_n, r_1,
\ldots, r_n$, as required to fulfill the definition of a VSS.

\begin{table}[h]
\begin{center}
\caption{Proof sizes for PoM and VSS.}
\begin{tabular}{l|r|r}
& \multicolumn{1}{c|}{Group Elements} & \multicolumn{1}{c}{Field Elements} \\ \hline
Base PoM & $2 \lceil \log_2 (3\lambda + 1) \rceil + 8$      & $6$ \\ \hline
Ext. PoM & $2 \lceil \log_2 3m (3 \lambda + 1) \rceil + 8$  & $6$ \\ \hline
VSS      & $2 \lceil \log_2 3nm (3 \lambda + 1) \rceil + 8$ & $6$ 
\end{tabular}
\label{tab:proof_efficiency}
\end{center}
\end{table}

\subsection{Practical Parameters}
\label{sec:limits}

To use a single prime order group, we make the
assumption that the size of the group is large enough that all of the
sharing primes $p_i$ are smaller than the group order. As there is
also a limit on how small the sharing primes can be in practice, this
necessarily puts a limit on the dynamic range of weights that can be
supported. If our group order is $p_0 < 2^N$, then as we will see in
Section~\ref{sec:full_mod}, the maximum weight supported is less than
$N_{max} \le N/2$. On the other end, if the minimum practical weight
is $N_{min}$, then the dynamic range available is $N/2 - N_{min}$.

For example, if we assume a 256-bit group order (such as curve
Sec256k1), and a minimum prime length of 10 bits, then the dynamic
range\footnote{Dynamic range is the ratio of the maximum to the
minimum value.} available is 118. In other words, if the largest
party has $10\%$ of the total weight, then the smallest party that
can take part has weight $0.085\%$. We feel this is sufficient for
most practical purposes, including Ethereum (see
Section~\ref{sec:perf}), however if a larger dynamic range is needed,
then there are several solutions:

\begin{enumerate}

  \item \emph{Use a larger order group}: For example, Curve448 has order
  approximately 446 bits. With the same lower limit of 10 bits, this
  produces a dynamic range of 213. Using our example above, if the
  largest party has $10\%$ of the total weight, then the smallest
  party is now $0.047\%$ of the total weight.

  \item \emph{Weight pooling}: Similar to virtualization, it may be possible
  at a protocol level to have very small weight participants pool
  their weight together if they agree to act as one party. While this
  goes against the design of the protocol at a high level, for very
  small participants, this may be an acceptable solution if it
  results in considerable efficiency improvements.

  \item \emph{Virtualization}: On the opposite extreme, if there are very
  few very large participants, it may be possible to virtualize their
  shares. For example, if one party has $33\%$ of the total weight,
  then they could be split into three virtual parties, each with
  $11\%$ of the total weight. If all other parties have weight less
  than $11\%$, then only this party incurs the extra overhead. This
  produces significantly better fairness for smaller parties, while
  keeping the overhead for the large party quasi-logarithmic.

\end{enumerate}

\subsection{Publicly Verifiable Secret Sharing}
\label{sec:pvss-overview}

Notice that the above scheme requires a round of interaction between the participants to ensure a quorum of sufficient size has received valid shares and accepts the deal  As described in~\cite{pvss}, we can eliminate
this round of interaction to implement what is called \emph{Publicly
Verifiable Secret Sharing (PVSS)} if we use a Verifiable Encryption scheme,
such as Verifiable ElGamal (sometimes called ``ElGamal in the
exponent''). This allows the dealer to broadcast encryptions of the
share values, where the share of the $i$-th party is verifiably encrypted under the $i$-th party's public key. 
The dealer then attaches a NIZK proof to prove to (all) the participants that the encrypted
values match the committed values. 
 Thus the participants can verify
that the shares of all parties are correct without revealing them and
without any interaction. Our construction is amenable to the PVSS
scheme of~\cite{pvss} to eliminate interaction between participants
at the cost of a larger broadcast to all participants.

\subsection{Notation and Definitions}
\label{sec:notation}

We use $\lambda$ for the security parameter, which may be implicit in
a set of parameters, for example the parameters of the verifiable
secret sharing scheme. We use $\mathbb{F}$ to denote an arbitrary
field, and $\mathbb{Z}_p$ to denote the field of integers modulo a
prime $p$. Let $\text{negl}(\lambda)$ denote a negligible function,
that is for all polynomial $p(\lambda)$, there exists $\lambda_0$
such that $\text{negl}(\lambda) < 1/p(\lambda) \forall \lambda >
\lambda_0$. We denote by $[n]$ the set $\{1, \ldots, n\}$. We use the
$\vec{v}$ notation for vectors and uppercase bold font $\mathbf{M}$
for matrices.

\paragraph{Commitments} We use $\commit{x; r} \rightarrow c$ for a
commitment to $x$ using randomness $r$. Once an adversary provides
$c$ knowing $x$ and $r$, it should be hard to find another pair $(x',
r')$ such that $c = \commit{x'; r'}$ (binding). Without $x$ and $r$,
it should also be hard to determine which $x$ was committed to
(hiding). See Appendix~\ref{sec:commitments} for details and formal
definitions.

\paragraph{Arithmetic Circuit Proofs} We will make extensive use of
Arithmetic Circuit Proofs, a special case of Zero Knowledge Proofs
(ZKP)\footnote{A ZKP for computationally bounded adversaries is
technically an Argument, but we use the terms interchangeably.}. A
ZKP convinces a verifier that the prover knows a value $x$ satisfying
a relation $\mathcal{R}$ without revealing $x$. Here, the prover
shows that committed values satisfy an arithmetic circuit, defined by
multiplication gates $\vec{a} \circ \vec{b} = \vec{c}$ and
constraints on input and output values\footnote{Arithmetic circuits
of this form are also called Rank 1 Constraint Systems (R1CS).}. We
denote the prover and verifier algorithms as $\Pi_{CKT} = (P_{CKT},
V_{CKT})$, where $P_{CKT}((\vec{V}, \textbf{CKT}, A, B, C); (\vec{v},
\vec{a}, \vec{b}, \vec{c}, \vec{r}_v, (r_a, r_b, r_c)))$ is the
prover taking input commitment $\vec{V}$, circuit definition
$\textbf{CKT}$, wire commitments $A, B, C$, private inputs $\vec{v}$,
wire values $\vec{a}, \vec{b}, \vec{c}$, and randomness $\vec{r}_v,
r_a, r_b, r_c$. The verifier $V_{CKT}(\vec{V}, \textbf{CKT}, A, B, C)
\rightarrow \phi_{CKT}$ takes the same public inputs and returns
$\phi_{CKT} \in \{ \text{Accept}, \text{Reject} \}$. See
Appendix~\ref{sec:zk} and~\ref{sec:ckt_proofs} for formal definitions
and further details, and~\cite{bulletproofs} for an example
implementation.

\section{Proof-of-Mod}
\label{sec:proof-of-mod}

In this section we construct a NIZK proof that a committed value is
the result of taking another committed value modulo a known prime. We
start with the case that all values are within the order of the
discrete log group, and then show how to extent this technique to
arbitrary values. We build these proofs by constructing an arithmetic
circuit that is satisfied only if the proof condition holds. Since
such circuit proofs may be aggregated (see~\cite{bulletproofs}), we
can combine many such proofs into a much smaller proof than if they
were constructed individually.

\subsection{Initial Construction}
\label{sec:initial-construction}
\paragraph{Range Proof Circuits}
\label{sec:rp-ckts}
As we discussed in Section~\ref{sec:pom-overview}, we can use
Equations~\ref{eqn:range_ckt_1} to~\ref{eqn:range_ckt_3} to construct
an arithmetic circuit that is satisfied only if the input value $v_i$ is in
the range $[0, 2^n]$. We can tighten the range to $0 \le v_i < u_i$
by repeating this subcircuit, replacing $v_i$ with $u_i - v_i$. This
subcircuit is easily written in the form needed by circuit proofs
(see Appendix~\ref{sec:ckt_proofs}), requires $2n$ multiplications,
and $2n + 2$ constraints.

\paragraph{Disjunction of Ranges}
\label{sec:rp-disj}
We now construct a circuit for a disjunction of ranges, that is,
\emph{either} $0 \le v_1 < 2^n$ \emph{or.} $0 \le v_2 < 2^n$. Let us
assume that the prover chooses $a_{i,j}$ and $b_{i,j}$ such that
(\ref{eqn:range_ckt_1}) and ($\ref{eqn:range_ckt_2}$) hold for both
values, but (\ref{eqn:range_ckt_3}) holds for one of $v_1, v_2$
but not the other. In this case, we wish to prove that either \emph{all.}
$c_{1,j}$ values are zero, or \emph{all} $c_{2,j.}$ values are zero.
To this end we construct two polynomials:

\begin{align}
\label{eqn:c_poly}
\sum_{j = 1}^{n} c_{i,j} \cdot z^{j-1} &= c'_i
& i \in \{1, 2\}
\end{align}

After the prover has committed to the $a_{i,j}$ and $b_{i,j}$ values,
the verifier chooses a random $z \in \mathbb{F}$. If all of the
$c_{i,j}$ values are zero then $c'_i$ will be zero for any choice of
$z$. If at least one of the $c_{i,j}$ values is non-zero, then $c'_i$
will be non-zero with overwhelming probability. We can add one
more multiplication gate for $c_1' \cdot c_2' = 0$. If one of the
ranges hold, then this gate is satisfied. If neither range holds,
then with very high probability $c_1' \cdot c_2' \neq 0$. Notice that
we can replace the $n$ constraints in equation
(\ref{eqn:range_ckt_3}) with the single constraint in equation
(\ref{eqn:c_poly}) for each element in the disjunction by eliminating
the intermediate variables, thus reducing the number of constraints
from $2n$ to $2$.


It may seem that the prover cannot satisfy the circuit without first
knowing the choice of $z$, but there is a solution. The prover sets
$a_{i,1} = v_i$, $b_{i,1} = 0$, and all other $a_{i,j} = 0$, $b_{i,j}
= -1$. As a result, $c_{i,1} = v_i$ and all other $c_{i,j} = 0$
satisfies the circuit for any value $z$ chosen by the verifier.

\paragraph{Base Proof-of-Mod}
As described in Section~\ref{sec:pom-overview}, we first construct a
ZKA for a congruence relationship between two values
when all values are less than the order of the group used for the
commitments, which we call ``Base Proof-of-Mod''.

\begin{definition}[Base Proof-of-Mod]
  \label{def:base_pom}
  Given a commitment scheme $(\text{Setup}, \text{Com})$ over a
  message space $\mathcal{M} \subset \mathbb{N}$, a proof-of-mod is a
  zero-knowledge argument of knowledge for the relation
  $\mathcal{R}_{\text{mod}}$:

  \begin{multline}
    \mathcal{R}_{\text{mod}} ((V, S, p), (v, s, r_v, r_s)) := \\
    V = \commit{v, r_v} \wedge
    S = \commit{s, r_v}
    \wedge v = s \bmod p
  \end{multline}
  
  Where $V, S \in \mathcal{C}$, $p \in \mathbb{N}$, $v, s \in
  \mathcal{M}$, and $r_v, r_s \in \mathcal{R}$, and the modulo
  operation is over the natural numbers.

\end{definition}

As discussed in Section~\ref{sec:pom-overview}, the prover must show
that there exists a value $k$ such that $s = v + k p$ and that the
resulting value does not wrap around, i.e.\ that $v + kp < p_0$. This
results in the disjunction of ranges shown in
Equation~\ref{eqn:range_dis}, which we can use to construct a circuit
that is satisfied if and only if the proof condition holds. Let $p_0
= |\mathcal{M}|$ be the order of the commitment scheme, which we
assume to be prime, and $p_0 = q p + t$ where $q, t \in \mathbb{N}$
and $0 \le t < p$. Our circuit takes as input $v,s$ and is
constructed as follows:

\begin{enumerate}[nosep]
  \item Write constraint $s = v + k p$ for intermediate variable $k$.
  \item Write range subcircuits for $0 \le v < p$ and $0 \le k \le q$.
  \item Write disjunction subcircuit for $k < q$ or $v < t$.
\end{enumerate}

We show an optimized version of this circuit in
Appendix~\ref{sec:pom_eqns}, which uses $3n_1 + 3n_2 + 1$
multiplications and $5n_1 + 5n_2 + 9$ constraints, where $p <
2^{n_1}$ and $q < 2^{n_2}$. Let $\text{ModCkt}(p_0, p, z) \rightarrow
\mathbf{CKT}$ be a function that generates the circuit specification
using the supplied constants.

Let $\text{ModSolve}(v, s, p_0, p) \rightarrow \vec{a}, \vec{b},
\vec{c}$ be a function that produces the wire values that solve the
circuit produced by $\text{ModCkt}(p_0, p, z)$ for the given values.
An implementation of ModSolve is shown in
Appendix~\ref{sec:pom_eqns}. We can now write the base proof-of-mod
as shown in Figure~\ref{fig:base_pom}.

\DeclareExpandableDocumentCommand{\spancols}{O{l}mm}{%
\multicolumn{#2}{#1}{\ensuremath{#3}}
}

\begin{figure*}
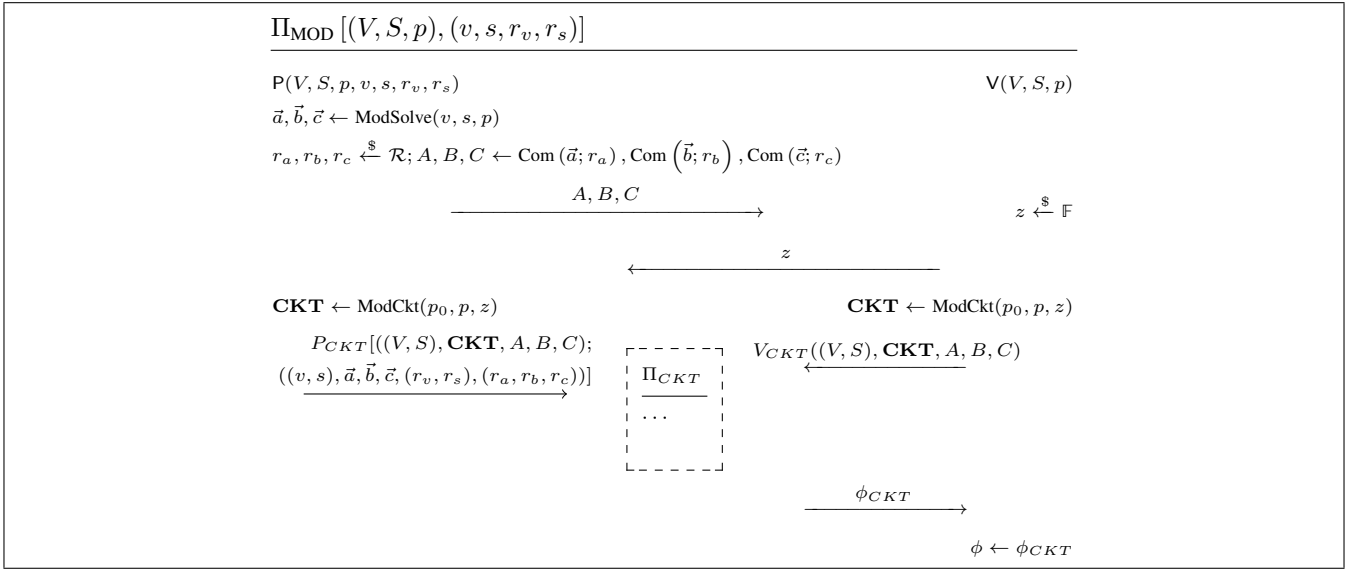

\procedureblock[codesize=\scriptsize]{$\Pi_{\text{MOD}}\left[(V, S, p), (v, s, r_v, r_s)\right]$}{
  \\
\< \prover(V, S, p, v, s, r_v, r_s) \< \< \> \verifier(V, S, p) \\
\< \spancols{5}{\vec{a}, \vec{b}, \vec{c} \leftarrow \text{ModSolve}(v, s, p)} \\
\< \spancols{5}{r_a, r_b, r_c \xleftarrow{\$} \mathcal{R};
    A, B, C \leftarrow \commit{\vec{a}; r_a}, \commit{\vec{b}; r_b}, \commit{\vec{c}; r_c}} \\
\< \sendmessagerightx[4cm]{5}{A, B, C} \> z \xleftarrow{\$} \mathbb{F} \\
\< \> \sendmessageleftx[4cm]{4}{z} \\
\< \mathbf{CKT} \leftarrow \text{ModCkt}(p_0, p, z)
  \< \> \spancols[r]{3}{\mathbf{CKT} \leftarrow \text{ModCkt}(p_0, p, z)} \\
  \< \sendmessageright*{
    P_{CKT}[((V, S), \mathbf{CKT}, A, B, C); \\
    ((v, s), \vec{a}, \vec{b}, \vec{c}, (r_v, r_s), (r_a, r_b, r_c))]}
    \< \dbox{\begin{subprocedure}
    \procedure{$\Pi_{CKT}$}{
      \> \cdots \\
    } \end{subprocedure}}
  \> \sendmessageleftx[2cm]{3}{V_{CKT}((V, S), \mathbf{CKT}, A, B, C)} \\
  \< \< \> \sendmessagerightx[2cm]{3}{\phi_{CKT}} \\
  \< \< \< \> \phi \gets \phi_{CKT}
}
\caption{Base Proof-of-Mod Protocol.}
\label{fig:base_pom}
\end{figure*}

\subsubsection{Security}
\label{sec:pom_sec}

Security rests on the security of the commitments and the circuit
proofs. We state our security theorem below, and provide a proof.

\begin{theorem}

If the commitment scheme $(\text{Setup}, \text{Com})$ used is
computationally binding and perfectly hiding, and the circuit proof
protocol $\Pi_{CKT}$ is perfectly complete, provides computational
soundness, and special honest verifier zero-knowledge, then
$\Pi_{\text{MOD}}$ is perfectly complete, provides computational
soundness, and special honest verifier zero-knowledge.

\end{theorem}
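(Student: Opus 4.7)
The plan is to establish the three properties in turn, each by reducing to the corresponding property of the commitment scheme and of $\Pi_{CKT}$. \emph{Perfect completeness} is almost immediate: if an honest prover holds a witness $(v,s,r_v,r_s)$ with $v = s \bmod p$, then by construction $\text{ModSolve}(v,s,p_0,p)$ returns wire values $(\vec a,\vec b,\vec c)$ that satisfy the circuit produced by $\text{ModCkt}(p_0,p,z)$ for \emph{every} choice of $z$, so perfect completeness of $\Pi_{CKT}$ makes the verifier output Accept with probability $1$. For \emph{SHVZK} I would build a simulator that samples the challenge $z$ from the honest distribution, places $A,B,C$ as commitments to arbitrary default values (say $0$) under fresh randomness, and invokes the SHVZK simulator of $\Pi_{CKT}$ on the public inputs $((V,S),\mathbf{CKT},A,B,C)$ with challenge $z$. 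Perfect hiding makes the marginal distribution of $A,B,C$ identical to the honest one, after which SHVZK of $\Pi_{CKT}$ carries the rest of the transcript.

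The bulk of the work is \emph{computational soundness}, which I would prove by contradiction. Suppose an adversary $\mathcal{A}$ produces accepting transcripts with non-negligible probability on some statement $(V,S,p)$ for which no witness exists, i.e.\ every opening $(v,r_v),(s,r_s)$ of $V,S$ satisfies $v \ne s \bmod p$ over $\mathbb{N}$. Since $V$ and $S$ appear as commitment inputs to $\Pi_{CKT}$, I combine computational binding of the commitment scheme with computational soundness of $\Pi_{CKT}$ to argue that acceptance forces the extracted wire values and openings to satisfy the ModCkt constraints: namely $s \equiv v + kp \pmod{p_0}$ together with one of the two range branches, $(0 \le v < p \wedge 0 \le k < q)$ or $(0 \le v < t \wedge 0 \le k \le q)$. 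In either branch one verifies $v + kp < p_0$, so the congruence lifts to an integer equality and $v = s \bmod p$, contradicting the assumed absence of a witness. Hence non-negligible acceptance on a false statement translates into a non-negligible binding break or $\Pi_{CKT}$ soundness break.

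The subtle point — and the main obstacle — is justifying why the \emph{disjunction} of the two range branches is actually enforced, since the circuit itself only encodes the weaker polynomial equation in Eq.~(\ref{eqn:c_poly}). Binding fixes the $a_{i,j},b_{i,j}$ committed inside $A,B$ before the verifier sends $z$, and with them the values $c_{i,j}$ induced by Eq.~(\ref{eqn:range_ckt_3}). If \emph{both} range branches fail, then both vectors $(c_{1,j})_j$ and $(c_{2,j})_j$ contain at least one nonzero entry, so both polynomials $\sum_{j=1}^{n} c_{i,j} z^{j-1}$ are nonzero of degree less than $n$ in $\mathbb{F}$; Schwartz--Zippel gives $\Pr_z[c_1'(z)\cdot c_2'(z) = 0] \le 2(n-1)/|\mathbb{F}|$, which is negligible for $|\mathbb{F}|$ exponential in $\lambda$. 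Summing this with the circuit-soundness error and the binding error yields the overall soundness bound.

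The real care in this proof is to distinguish arithmetic in $\mathbb{Z}_{p_0}$, where the circuit actually operates, from arithmetic over $\mathbb{N}$, which is what the relation $\mathcal{R}_{\text{mod}}$ demands. The disjunction of ranges is precisely the device that rules out the ``wraparound'' assignments $v + kp \ge p_0$ that would otherwise let a malicious dealer satisfy $s \equiv v + kp \pmod{p_0}$ while $v \ne s \bmod p$ over the integers; pinning down why the $t$ edge-case is fully covered by the second branch, and why the random-$z$ enforcement is tight, is where the proof really earns its keep.
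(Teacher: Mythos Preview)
Your proposal is correct and follows essentially the same approach as the paper: completeness from $\text{ModSolve}$ producing wire values valid for every $z$, soundness via a case analysis combining commitment binding, $\Pi_{CKT}$ soundness, and a Schwartz--Zippel bound on the disjunction polynomial, and SHVZK by sampling $z$ and invoking the $\Pi_{CKT}$ simulator. The one minor deviation is in SHVZK, where you generate $A,B,C$ yourself as fresh commitments to dummy values and appeal to perfect hiding, whereas the paper has the $\Pi_{CKT}$ simulator $\mathcal{S}_{CKT}$ output $A,B,C$ as part of its transcript; your formulation is arguably cleaner since $A,B,C$ are public \emph{inputs} to $\Pi_{CKT}$ in the protocol, but the two are equivalent in effect.
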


\begin{proof}

\emph{Completeness.} Given $v, s, r_v, r_s$ such that $v = s
\mod p$, the honest prover can always find a value of $k$ and all
intermediate values that satisfies the circuit for any value of $z$.
From the completeness of the circuit proof protocol, the verifier
will always accept.

\emph{Computational Soundness.} Consider a cheating prover $P^*$
that is given $v, s, r_v, r_s$ such that $v \neq s \mod p$, and both
the verifier and prover are given $V, S, p$. Assume $S = \commit{s,
r_s}, V = \commit{v; r_v}$, and $0 \le v < p$ since this is the most
advantageous condition for the prover. Consider the following cases.

\emph{Case 1.} $P^*$ finds either $v', r_v'$ such that $V =
\commit{v', r_v'}$ and $v' = s \mod p$, or $s', r_s'$ such that $S =
\commit{s', r_s'}$ and $v = s' \mod p$. In either case the prover has
broken the binding property of the commitment scheme, which by
assumption the probability of any PPT adversary finding such values
is negligible.

\emph{Case 2.} $P^*$ sends commitments $A \leftarrow
\commit{\vec{a}; r_a}, B \leftarrow \commit{\vec{b}; r_b}, C
\leftarrow \commit{\vec{c}; r_c}$, but then finds another set of
openings $\vec{a}', \vec{b}', \vec{c}', r_a', r_b', r_c'$ such that
$A = \commit{\vec{a}'; r_a'}, B = \commit{\vec{b}'; r_b'}, C =
\commit{\vec{c'}; r_c'}$ that satisfy the circuit. In this case, the
prover has broken the binding property of the commitment scheme. By
assumption the probability of any PPT adversary finding such values
is negligible.

\emph{Case 3.} $P^*$ sends commitments $A, B, C$ to values
$\vec{a}, \vec{b}, \vec{c}$ that do not satisfy the circuit, yet the
prover convinces the verifier to accept. In this case, the prover has
broken the soundness of $\Pi_{CKT}$. By assumption the probability of
any PPT adversary finding such values is negligible.

\emph{Case 4.} $P^*$ finds values that satisfy the disjunction
section of the circuit despite the disjunction being false. More
precisely, at least one of $a_{5,1} - b_{5,1} - 1 \neq 1$ and at
least one of $a_{6,1} - b_{6, 1} - 1 \neq 1$, yet one of $c_1', c_2'$
is zero (see Equation~\ref{eqn:c_poly} in Section~\ref{sec:rp-disj}).
By the fundamental theorem of algebra, a polynomial of degree $n$ has
at most $n$ roots. Thus, if $z$ is chosen uniformly at random, after
the polynomial is fixed, then the probability of this occurring is
$2n / |\mathbb{F}|$. Assuming $n$ grows at most linearly with the
security parameter $\lambda$ and $|\mathbb{F}|$ is exponential in the
same, then the probably of this occurring is negligible.

\emph{Case 5.} If none of the above cases occur, then the ranges
specified in Section~\ref{sec:pom-overview} are satisfied, and there
exists $k \in \mathbb{F}$ such that $s = v + k p < p_0$.

Since the three cases above cover all possibilities, from the law of
total probability it can be seen that the advantage of any PPT
adversary in the soundness game is negligible.

\emph{Special Honest Verifier Zero-Knowledge.} We show SHVZK by
building a simulator $\mathcal{S}$ that uses the simulator
$\mathcal{S}_{CKT}$ of $\Pi_{CKT}$ to simulate the protocol. First,
$\mathcal{S}$ chooses $z \in \mathbb{F}$ uniformly at random, and
constructs the circuit using $\text{ModCkt}(p_0, p, z)$. It then
calls $S_{CKT}$ to get the transcript of the circuit proof including
the commits $A, B, C$. It then outputs the transcript with the
addition of $z$ after $A, B, C$ are issued by the prover. By the
SHVZK property of $\Pi_{CKT}$, the transcripts produced by
$\mathcal{S}$ must have the same distribution as the transcripts
produced by the honest run of the protocol.
\end{proof}

\subsubsection{Efficiency}
\label{sec:pom_efficiency}

The circuit proof protocol $\Pi_{CKT}$ given in \cite{bulletproofs}
requires $2 \lceil \log_2 n \rceil + 8$ group elements and 5 field
elements, where $n$ in this case is the number of multiplication
gates. Our protocol adds only one field element, and uses $3 \lambda
+ 1$ multiplications, where $\lambda$ is the bit length of $p_0$.
Thus, in total our protocol requires $2 \lceil \log_2 (3\lambda + 1)
\rceil + 8$ group elements and $6$ field elements.

\subsection{Prime-Order Decomposition}
\label{sec:full_mod}

As discussed in Section~\ref{sec:pom-overview}, we can extend the base
proof-of-mod to arbitrary values by a technique we call ``prime-order
decomposition'', in which we decompose $s$ using $p_0$ as
a base. Consider that we can always write $s$ as:
\[ s = a_0 + a_1 {p_0} + a_2 {p_0}^2 + \dots + a_m {p_0}^m =
\sum_{i=0}^m a_i {p_0}^i \]

Where $a_i \in \mathbb{Z_{p_0}} \forall i \in [m]$. We can now define
an extended version of the proof-of-mod as follows.


\begin{definition}[Extended Proof-of-Mod]
  \label{def:ext_pom}
  Given a commitment scheme $(\text{Setup}, \text{Com})$ over a
  message space $\mathcal{M} \subset \mathbb{N}$, a proof-of-mod is a
  zero-knowledge argument of knowledge for the relation
  $\mathcal{R}_{\text{EMOD}}$:
  \begin{multline*}    
    \mathcal{R}_{\text{EMOD}} ((V, A_0, \ldots, A_m, p), \\
    (v, r_v, a_0, \ldots, a_m, r_0, \ldots, r_m)) := \\
    v = \sum_{i=0}^m a_i {p_0}^i \bmod p
    \wedge V = \commit{v, r_v} \wedge \\
    A_i = \commit{a_i, r_i} \forall i \in [m]
  \end{multline*}
  Where $V, A_0, \ldots, A_m \in \mathcal{C}$, $p \in \mathbb{N}$,
  $v, a_0, \ldots, a_m \in \mathcal{M}$, and $r_v, r_0, \ldots, r_m
  \in \mathcal{R}$, and the modulo operation is over the natural
  numbers.
\end{definition}

We will again take a circuit-building approach in order to prove
$\mathcal{R}_{\text{EMOD}}$. First, notice that we can decompose the
problem using Horner's method:
\begin{equation}
  \label{eqn:horner}
  v = a_0 + p_0 (a_1 + p_0 (a_2 + \dots + (a_{m-1} + p_0 a_m))) \mod p
\end{equation}

Let us define the intermediate steps in (\ref{eqn:horner}) as $v_i =
a_i + p_0 v_{i+1} = a_i + p_0 (\dots + p_0 a_m)$, where $v_m = a_m$.
Assuming $p_0 > p^2 + p$, then we can ensure that all values are within
$\mathbb{Z}_{p_0}$ by taking each step modulo $p$ as follows.
Let $a_i' = a_i \bmod p$ and $t = p_0 \bmod p$. Define:

\begin{align*}
  v'_i &= v_i \mod p \\
  &= a_i + p_0 \cdot v_{i+1} \mod p \\
  &= a'_i + t \cdot v'_{i+1} \mod p &\forall 0 \le i < m
\end{align*}

We can now apply the proof-of-mod circuit from
Section~\ref{sec:initial-construction} first to the $a_i'$ values,
and then to each of the $v'_i$, to build a circuit that proves the
correctness of $v$ with respect to $a_0, \ldots, a_m$. More
concretely, we construct the circuit as follows:

\begin{enumerate}[nosep]

  \item Write proof-of-mod subcircuits for $a_i' = a_i \bmod p$ for $i
  \in [0, m]$.

  \item Let $v'_m = a'_m$. For $j$ in $[0, m-1]$ write:

  \begin{enumerate}
    \item Constraint $v_j = a'_j + t \cdot v'_{j+1}$.
    \item Proof-of-mod subcircuit for $v'_j = v_j \bmod p$.
  \end{enumerate}

  \item Write constraint $v = v_0'$.

\end{enumerate}

In practice, the $a'_i, v_i, v_i'$ variables would be eliminated and
instead the bit decompositions used directly, but we present this
method for ease of understanding. If $p_0 > m p^2$ then instead of
$m$ PoM subcircuits for the $v_i'$ steps, we can directly write a
constraint for $v$ from the $a_i'$ values. Let $\text{EModCkt}(p_0,
p, z) \rightarrow \mathbf{CKT}$ be a function that generates the
circuit specification using the method described above, and let
$\text{EModSolve}(v, a_0, \ldots, a_m, p_0, p)$ be a function that
produces the wire values that solve $\mathbf{CKT}$. The resulting
protocol is identical to the previous protocol, replacing ModSolve
with EModSolve and ModCkt with EModCkt. Thus we elide the protocol
diagram and proof here. We denote the extended proof-of-mod protocol
proving $\mathcal{R}_{EMOD}$ as $\Pi_{EMOD}$.

\smallskip\noindent\textbf{Efficiency.}
\label{sec:full_pom_efficiency}
By constructing the proof as one large circuit we achieve
considerable efficiency in terms of proof size. As in the previous
subsection, each PoM requires $3 \lambda + 1$ multiplication gates,
where $p_0 < 2^\lambda$. The full construction uses $2m$ such PoM
circuits and no additional multiplications. Thus, the total
communication cost is $2 \lceil \log_2 m + \log_2 (3 \lambda + 1) +
\log_2 3 \rceil + 8$ group elements and $6$ field elements.

\paragraph{Non-Interactive Construction}
\label{sec:ni_pom}
The Fiat-Shamir heuristic~\cite{fiatshamir} allows us to convert any
public-coin interactive protocol into a non-interactive one with
computational soundness by replacing any randomness used by the verifier
with a hash of the transcript to that point. As is discussed in
detail in~\cite{bulletproofs}, the circuit proof protocol $\Pi_{CKT}$
can thus be rewritten in the form of a prover procedure that produces
a transcript, $\pi_{ckt}$, and a verifier procedure that takes
$\pi_{ckt}$ as input and outputs accept or reject.

Similarly, we can convert our proof-of-mod protocol into a non-interactive
proof simply by setting the verifier's challenge, $z$ to be the hash of
the commitments $A, B, C$ and the inputs to the protocol. We then use
the non-interactive version of the circuit proof as a subroutine to
get the full proof. We show the full details of the non-interactive
proof-of-mod protocol in Appendix~\ref{app:ni_pom_details}.

\section{CRT-Based VSS}
\label{sec:crt-vss}

Armed with the extended proof-of-mod we constructed in
Section~\ref{sec:proof-of-mod}, along with appropriate commitment and
arithmetic circuit proof schemes, we are now ready to flesh out the
CRT-based VSS scheme for which we gave a high level description in
Section~\ref{sec:crt-vss-overview}. We first construct a proof that
\emph{all} of the shares are correct by grouping all of the
proof-of-mods into a single circuit, and treating the $a_0, \ldots,
a_m$ values as intermediate variables that hold for all of the secret
shares. This way, separate commitments $A_1, \ldots, A_m$ are not
necessary. We can construct such a circuit taking $s_0, \ldots, s_n$
as inputs as follows:

\begin{enumerate}[nosep]
  \item Let $a_0, \ldots, a_m$ be intermediate variables
  \item Write constraint that $a_0 = s_0$.
  \item For $i$ in $1$ to $n$, write an extended proof-of-mod circuit
  for $s_i = \sum_{j=0}^m a_j p_0^j \mod p_i$.
\end{enumerate}

As in the Extended Proof-of-Mod construction, the intermediate
variables will be eliminated in an optimized circuit, but their
existence and consistency is still guaranteed. Let
$\text{VSSCKT}(p_0, \ldots, p_n, z) \rightarrow \mathbf{CKT}$ be a
function that constructs the circuit as described above using $z$ as
the challenge value for the proof-of-mod subcircuits, and let
$\text{VSSSolve}(s_0, \ldots, s_n, a_1, \ldots, a_m, p_0, \ldots,
p_n) \rightarrow \vec{a}, \vec{b}, \vec{c}$ be a function that
produces the wire values that solve $\mathbf{CKT}$ for any value of
$z$. We can now write our non-interactive VSS construction by
defining the Share and Reconstruct programs as shown in
Algorithms~\ref{alg:vss-share} and~\ref{alg:vss-reconstruct}
respectively.

This construction uses $n$ extended proof-of-mod circuits, each of
which uses $3m(2\lambda + 1)$ multiplication gates. Thus, the size of
$\pi_{ckt}$ is $2 \lceil \log_2 n + \log_2 m + \log_2 (3 \lambda + 1)
+ \log_2 3 \rceil + 8$ group elements and $6$ field elements,
including the commitments $A, B, C$. The only additional outputs are
the $n+1$ commitments $Y_0, \ldots, Y_n$ in $\pi$, and $2n$ field
elements $s_1, \ldots, s_n, r_1, \ldots, r_n$.

\SetAlgoSkip{}
\begin{algorithm}[h]
  \caption{VSS Share Procedure}
  \label{alg:vss-share}
  \KwIn{Secret $s_0 \in \mathbb{Z}_{p_0}$, primes $p_0, \ldots, p_n$}
  \KwOut{Shares $v_1, \ldots, v_n$, proof $\pi$}
  $a_1, \ldots, a_m \xleftarrow{\$} \mathbb{Z}_{p_0}$;
  $\vec{r} = r_1, \ldots, r_n \xleftarrow{\$} \mathcal{R}$;
  $Y_0 \gets \commit{s_0; r_0}$\;
  \For{$i = 1$ \KwTo $n$}{
    $s_i \gets s_0 + \sum_{j=0}^m a_j p_0^j \mod p_i$\;
    $Y_i \gets \commit{s_i; r_i}$\;
  }
  $\vec{a}, \vec{b}, \vec{c} \gets \text{VSSSolve}(s_0, \ldots, s_n, a_1, \ldots, a_m, p_0, \ldots, p_n)$\;
  $r_a, r_b, r_c \xleftarrow{\$} \mathcal{R};
  A, B, C \gets \commit{\vec{a}; r_a}, \commit{\vec{b}; r_b}, \commit{\vec{c}; r_c}$\;
  $z \gets H(p_0, \ldots, p_n, Y_0, \ldots, Y_n, A, B, C)$\;
  $\mathbf{CKT} \gets \text{VSSCkt}(p_0, \ldots, p_n, z)$\;
  $\pi_{ckt} \gets tr < \Pi_{CKT} > ((Y_0, \ldots, Y_n, A, B, C),$ $\mathbf{CKT},
  (s_0, \ldots, s_n), \vec{a}, \vec{b}, \vec{c}, \vec{r},
  r_a, r_b, r_c)$\;
  $\pi \gets (Y_0, \ldots, Y_n, A, B, C, \pi_{ckt})$\;
  \KwRet{$(s_0, r_0), \ldots, (s_n, r_n), \pi$}\;
\end{algorithm}

\begin{algorithm}[h]
  \caption{VSS Reconstruct Procedure}
  \label{alg:vss-reconstruct}
  \KwIn{Primes $p_0, \ldots, p_n$, shares $\{(i, v_i)\}_{i \in A}$, proof $\pi$}
  \KwOut{Secret $s'$ or $\bot$}
  $Y_0, \ldots, Y_n, A, B, C, \pi_{ckt} \gets \pi;
  s_i, r_i \gets v_i \ \forall i \in A$\;
  $z \gets H(p_0, \ldots, p_n, Y_0, \ldots, Y_n, A, B, C)$\;
  $\mathbf{CKT} \gets \text{VSSCkt}(p_0, \ldots, p_n, z)$\;
  \If{$V_{CKT}((Y_0, \ldots, Y_n), \mathbf{CKT}, A, B, C, \pi_{ckt})
        \neq \text{Accept}$ or $\exists i \in A\ | Y_i \neq \commit{s_i, r_i}$}{
    \KwRet{$\bot$}\;
  }
  $P_A \gets \prod_{i \in A} p_i$\;
  $q_i \leftarrow \prod_{j \in A \setminus \{i\}} p_j;
   q_i' \leftarrow {q_i}^{-1} \mod p_i\ \forall i \in A$\;
  $s' \leftarrow \sum_{i \in A} s_i q_i q_i' \mod P_A$\;
  \KwRet{$s' \mod p_0$}\;
\end{algorithm}

\paragraph{Security}
\label{sec:vss-security}

The security of our scheme rests on the security of the building
blocks. Secrecy follows from the hiding property of the commitment
scheme and the SHVZK property of the circuit proof. Correctness
follows from the perfect correctness of the PoM circuit. Finally,
commitment follows from the binding property of the commitment
scheme, and the soundness of the circuit proof. We state our final
theorem below and provide a proof.

\begin{theorem}
\label{thm:vss-deal-sec}

If the commitment scheme is computationally binding and perfectly
hiding, and the circuit proof used provides perfect completeness,
computational soundness, and SHVZK, then the NI-VSS shown in
Algorithms~\ref{alg:vss-share} and~\ref{alg:vss-reconstruct} is
secure using Definition~\ref{def:vss-sec}.

\end{theorem}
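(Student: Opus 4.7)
The plan is to prove each of the three VSS security properties in turn by reducing to the stated building-block assumptions, closely mirroring the structure of the base proof-of-mod security proof in Section~\ref{sec:pom_sec}. The main work is in the commitment/binding property, since secrecy and correctness follow almost syntactically from the hiding property and perfect completeness of the subprotocols respectively.

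\emph{Correctness.} First I would argue that if the dealer is honest, then every authorized set reconstructs $s_0$ and $V_{CKT}$ accepts. By construction of $\text{VSSSolve}$, the wire assignment satisfies $\mathbf{CKT}$ for \emph{any} challenge $z$ (this was the purpose of the $a_{i,1} = v_i$, $b_{i,1} = 0$ trick recalled in Section~\ref{sec:rp-disj}), so perfect completeness of $\Pi_{CKT}$ gives acceptance; the commitment checks $Y_i = \commit{s_i, r_i}$ hold by the definition of Share. CRT reconstruction then recovers $s = \sum a_j p_0^j + s_0$ modulo $P_A$, and since $P_A > (L+1)p_0$ for authorized $A$ we get $s' \bmod p_0 = s_0$ exactly as in Section~\ref{sec:crt-ss}.

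\emph{Secrecy.} The plan here is to build a simulator $\mathcal{S}_{\text{VSS}}$ that, given only the shares $\{(s_i, r_i)\}_{i \in A}$ for an unauthorized set $A$, produces a transcript indistinguishable from the real one. $\mathcal{S}_{\text{VSS}}$ picks the $Y_i$ for $i \notin A$ as commitments to arbitrary values (say $0$) using fresh randomness, computes $z$ via the Fiat--Shamir hash on this simulated tuple, then invokes $\mathcal{S}_{CKT}$ with challenge $z$ to produce $A, B, C$ and $\pi_{ckt}$. Indistinguishability follows by a standard hybrid: the perfect hiding of commitments makes the $Y_i$ for $i \notin A$ distributed identically to the honest ones, SHVZK of $\Pi_{CKT}$ makes $\pi_{ckt}$ computationally indistinguishable, and the statistical argument of Section~\ref{sec:crt-ss} (with $P_{MAX}/L$ error) guarantees that the missing $s_i$ values in an unauthorized set leak at most $2^{-\lambda}$ about $s_0$.

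\emph{Commitment.} This is the main obstacle: I must show that a malicious dealer whose proof is accepted is effectively bound to a unique secret, so that any two authorized sets reconstruct the same $s_0$. The approach is a case analysis in the style of the base PoM proof. Suppose a PPT adversary $\mathcal{D}^*$ outputs $\pi$ and shares that verify, but two authorized sets $A, A'$ reconstruct distinct values $s_0 \neq s_0'$. By the binding property of the commitment scheme, the openings $(s_i, r_i)$ revealed during reconstruction must match the unique values committed inside each $Y_i$, except with negligible probability. Thus the committed $s_0, \ldots, s_n$ are fixed by $\pi$. Acceptance of $V_{CKT}$ then, by soundness of $\Pi_{CKT}$ and by Case~4 of Section~\ref{sec:pom_sec} applied to each of the $n$ EMOD subcircuits (bounding the disjunction-cheating probability by $O(nm\lambda/|\mathbb{F}|)$ via a union bound and the Schwartz--Zippel-style argument on the $z$-polynomial), forces the existence of intermediate values $a_1, \ldots, a_m \in \mathbb{Z}_{p_0}$ such that $s_i = s_0 + \sum_j a_j p_0^j \bmod p_i$ for every $i$ simultaneously. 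But then both reconstructions yield $s' = s_0 + \sum_j a_j p_0^j \bmod P_A$ (resp.\ $\bmod P_{A'}$), and reducing modulo $p_0$ gives the same $s_0$ in both cases, contradicting $s_0 \neq s_0'$. Summing the negligible error probabilities from binding, circuit soundness, and disjunction cheating via the law of total probability completes the argument.
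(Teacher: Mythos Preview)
Your plan is correct and follows essentially the same decomposition and case analysis as the paper: correctness from perfect completeness plus CRT, secrecy from perfect hiding plus SHVZK plus the statistical bound on unauthorized shares, and commitment via the binding/soundness/Schwartz--Zippel case split. The one minor wrinkle is in your secrecy simulator: in the actual protocol $z = H(p_0,\ldots,p_n, Y_0,\ldots,Y_n, A, B, C)$ also depends on $A, B, C$, so you cannot compute $z$ \emph{before} invoking $\mathcal{S}_{CKT}$ as you describe; this is easily patched by programming the random oracle (which the paper already invokes in Case~3 of its commitment argument), and indeed the paper sidesteps the issue entirely by arguing directly that the distributions of $\text{Share}(s)$ and $\text{Share}(s')$ are close rather than building an explicit simulator.
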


\begin{proof}
\emph{Secrecy.} Consider an arbitrary unauthorized access set
$\bar{A} \in \bar{\mathcal{A}}$, and any two secrets $s, s'$. By the
hiding property of the commitment scheme, the distribution of $Y_0,
\ldots, Y_n, A, B, C$ are independent and indistinguishable for any
two secrets $s, s'$. Similarly, by the SHVZK property of the circuit
proof, $\pi_{ckt}$ is indistinguishable for any two secrets $s, s'$.
The choices of $\{r_i\}_{i \in \bar{A}}$ are independent and
uniformly random, and thus also indistinguishable. Finally, we assume
that $p_0, \ldots, p_n$ are chosen such that for any $\bar{A} \in
\bar{\mathcal{A}}$, $P_{\bar{A}} \le P_{max} < p_0^{m+1}$. From
\cite{crypto_w_weights}, we can state the the statistical distance of
$\{s_i\}_{i \in \bar{A}}$ from uniform is at most $P_{max} /
p_0^{m+1}$. Thus, if $P_{max} << p_0^m$, then the statistical
distance between the output of $\text{Share}_{pp}$ for any $s, s'$ is
negligible.

\emph{Correctness.} Consider an arbitrary $A \in \mathcal{A}$.
First, note that the output of $\text{Share}_{pp}$ will always be
such that $s_i = s_0 + \sum_{i = 1}^{m-1} a_i p_0^m \mod p_i$, $Y_i =
\commit{s_i; r_i} \forall i \in [n]$. Next, from the perfect
correctness of the extended proof-of-mod circuit and the circuit
argument, the output values $A, B, C, \pi_{ckt}$ are such that if $z
= H(A, B, C)$ and $\mathbf{CKT} = \text{VSSCkt}(p_0, \ldots, p_n,
z)$, then $V_{CKT}((Y_0, \ldots, Y_n), \mathbf{CKT}, A, B, C,
\pi_{ckt})$ will always accept. Finally, by the Chinese Remainder
Theorem, if $p_0, \ldots, p_n$ are chosen such that $P_A > P_{min} >
p_0^{m+1}$, then the output of $\text{Reconstruct}_{pp}$ will always
be $s' = s_0$.

\emph{Commitment.} To prove that no adversary $D^*$ can in time
polynomial in the security parameter produce a proof and two sets of
shares $\pi, \{v_i\}_{i \in A}, \{v'_i\}_{i \in A'}$ where $A, A' \in
\mathcal{A}$ such that the two sets of shares reconstruct different
secrets, we consider the following cases.

\emph{Case 1.} Assume in this case that $D^*$ is able to commit
to shares $s_1, \ldots, s_n$ as $Y_1, \ldots, Y_n$ that are
consistent with the honest protocol, but then provides an opening for
at least one of the shares for a different value, $Y_i = \commit{s_i;
r_i} = \commit{s'_i; r'_i}$. By the computational binding of the
commitment scheme, the probability of a PPT $D^*$ finding such values
is negligible.

\emph{Case 2.} In this case, assume the dealer commits to $s_1,
\ldots, s_n$ as $Y_1, \ldots, Y_n$ such that the arithmetic circuit
is not satisfied, yet manages to produce a transcript $\pi_{ckt}$
such that $V_{CKT}((Y_0, \ldots, Y_n), \mathbf{CKT}, A, B, C,
\pi_{ckt})$ accepts. By the computational soundness of the circuit
proof, the probability of any PPT $D^*$ producing such a proof is
negligible.

\emph{Case 3.} In this case, assume that the commitments are
binding and that the circuit proof is sound, but that $D^*$ is able
to guess values of $\vec{a}, \vec{b}, \vec{c}$ that satisfy the
circuit for the value of $z$ obtained by the hash function. By the
random oracle model, we can model $H$ as a uniformly random function.
Thus the probability of $z$ satisfying the circuit for a given choice
of $\vec{a}, \vec{b}, \vec{c}$ is at most $n / |\mathbb{F}|$ from the
same argument given in Section~\ref{sec:pom_sec}. Thus, if $n$ grows
at most linearly in the security parameter, and $|\mathbb{F}|$ grows
exponentially, then each attempt has negligible probability of
success, and a polynomial number of such guesses by $D^*$ still
results in negligible probability of success.

\emph{Case 4.} In this case, we assume that the commitments are
binding, the circuit proof is sound, and the choice of $z$ is such
that no false roots are found. It must be the case that all of the
equations in the circuit are satisfied. Thus, there exists $a_0,
\ldots, a_m \in \mathbb{Z}_{p_0}$ such that for all $i \in [n]$, $s_i
= \sum_{j=0}^m a_j p_0^j \mod p_i$. As a result, the reconstruction
program will always output $s' = a_0$.

By the law of total probability, the probability of $D^*$ producing
inconsistent share values that pass the verification program is
negligible.
\end{proof}

\paragraph{Practical Parameters}
\label{sec:vss-limits}
As we discussed in Section~\ref{sec:wrss-overview}, in the WRSS
scheme, the weight of participant $P_i$ is determined by the length
of their associated prime $p_i$. In our construction, $p_0 > p_i^2 +
p_i$, and if $p_0 < 2^N$, then $p_i < 2^{\frac{N-1}{2}}$. If the minimum
practical value of $p_i$ is $N_{min}$, then the dynamic range
supported by the scheme is $N/2N_{min}$. For example, in Curve25519,
$p_0$ is approximately $2^{255}$, thus if $N_{min} = 10$ then $2^{10}
< p_i < 2^{127}$ and our dynamic range is $12.7$. Beyond this range
parties must again be virtualized, but notice that the virtualized
parties themselves have weight. Thus, the virtualization process is
not only more efficient, in that the number of virtual parties is
reduced by up to an order of magnitude, but also more fair, as the
weights are cumulative.
\section{Performance}
\label{sec:perf}

To evaluate the performance of our VSS, we take a case study of
Ethereum staking. We then provide a theoretical analysis of the
overhead of using a traditional VSS scheme with virtualization as
compared to our scheme. Finally, we implement our scheme to see the
operating parameters in practice.

\paragraph{Ethereum Staking}
\label{sec:eth}
Staking is the process used by proof-of-stake blockchains to validate
transactions. To participate, a user must deposit a certain amount of
Ether (the native currency of Ethereum) into a smart contract, which
is called their ``stake''. The user is then allowed to participate in
the consensus mechanism, and is rewarded for doing so. If the user
misbehaves, their stake is slashed, i.e.\ a portion of their deposit is
destroyed. A minimum stake of 32 Ether is required to
participate.~\cite{eth_staking}

The current system uses aggregate BLS signatures, which according to
one analysis~\cite{eth_sigs}, processes $~28,000$ signatures per
slot, and may rise to 1.8 million. Even with extremely efficient
aggregate signatures, this is a large overhead. The resulting
distribution of stakes is extremely wide, with the top two staking
pools (Lido and Coinbase) controling 45\% of the total stake. We show
the distribution of the rest of the system in
Figure~\ref{fig:eth_stake}.

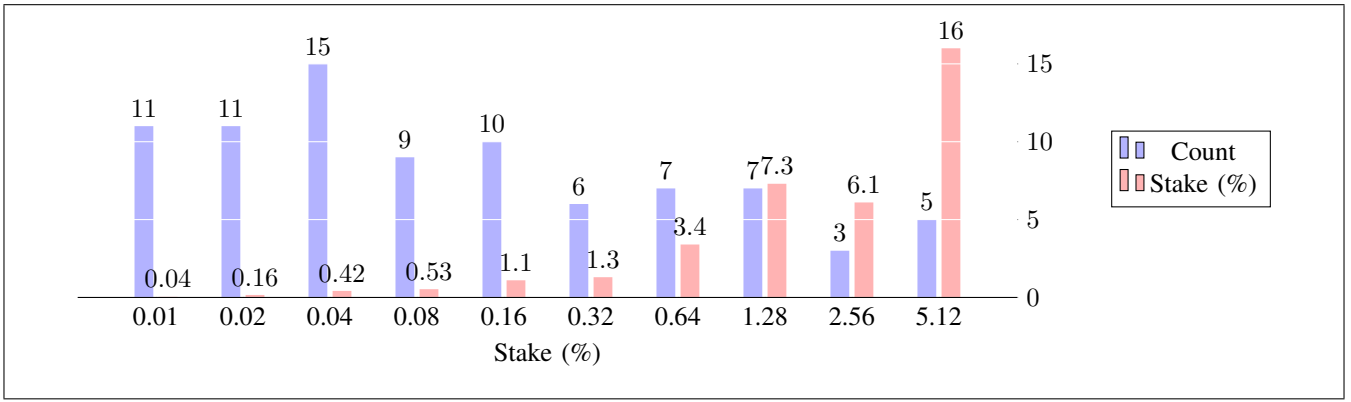
\begin{figure*}[t]
\begin{center}
\begin{tikzpicture}
    \centering
    \begin{axis}[
          ybar, axis on top,
          height=5cm, width=0.8 \textwidth,
          bar width=0.25cm,
          ymajorgrids, tick align=inside,
          major grid style={draw=white},
          enlarge y limits={value=.1,upper},
          ymin=0, ymax=15,
          axis x line*=bottom,
          axis y line*=right,
          y axis line style={opacity=0},
          tickwidth=0pt,
          enlarge x limits=true,
          legend style={
              at={(1.1,0.5)},
              anchor=west,
              legend columns=1,
              /tikz/every even column/.append style={column sep=0.5cm}
          },
          xlabel={Stake (\%)},
          symbolic x coords={
            0.01, 0.02, 0.04, 0.08, 0.16, 0.32, 0.64, 1.28, 2.56, 5.12},
         xtick=data,
         nodes near coords={
          \pgfmathprintnumber[fixed]{\pgfplotspointmeta}
         }
      ]
      \addplot [draw=none, fill=blue!30] coordinates {
        (0.01,11) (0.02,11) (0.04,15) (0.08,9) (0.16,10) (0.32,6) (0.64,7) (1.28,7) (2.56,3) (5.12,5)
        };
     \addplot [draw=none,fill=red!30] coordinates {
      (0.01,0.04) (0.02,0.16) (0.04,0.42) (0.08,0.53) (0.16,1.1) (0.32,1.3) (0.64,3.4) (1.28,7.3) (2.56,6.1) (5.12,16)
       };
      \legend{Count, Stake (\%)}
    \end{axis}
  \end{tikzpicture}
\caption{Distribution of Ethereum Stakes for pools other than Lido and Coinbase. Note that the x-axis is logarithmic.}
\label{fig:eth_stake}
\end{center}
\end{figure*}

\paragraph{Comparing Virtualization to WR-VSS}
\label{sec:vss_comparison}
Suppose Ethereum adopts a consensus scheme based on threshold
signatures, wherein validators receive shares of the signing key and
collaborate to jointly sign proposed blocks. This would require a
full distributed key generation protocol, but for the sake of
comparison consider only VSS overhead. As a baseline, consider the
Feldman VSS~\cite{feldmanvss}, a simple yet efficient VSS that uses
Shamir's Secret Sharing scheme applied to discrete-log
groups\footnote{More recent work provides better security, but to our
knowledge, it does not use less bandwidth.}. To share a secret among
$N$ (virtual) parties with threshold $t$, the scheme requires $N + t$
group elements to be broadcast, and one field element sent privately
to each (virtual) party.

From our analysis, such a scheme would require $4,110$ virtual
parties, and if $t = 2N/3$ then $6,850$ group elements must be
broadcast and $4,110$ field elements sent privately to each party.
This assumes a minimum stake of $0.02\%$ which excludes only $0.02\%$
of the current staking. This is not a large improvement over the
current design, and has problems with fairness%
\footnote{E.g.\ if a validator has $0.05\%$ of the total stake,
should they receive 2 or 3 votes?}.

By contrast, if Ethereum used our Weighted-Ramp VSS to generate the
signing key with a reconstruction threshold set to $2/3$ of the total
weight, and if the minimum stake of $0.02\%$ corresponds to a weight
of 10, then the total weight in the system is $41,125$. Using a curve
of order $\sim 2^{255}$, we set $m=108$ and need $365$ parties. As
can be seen in Table~\ref{tab:eth_bandwidth}, this produces more than
a $100\times$ improvement on the current design, and nearly
$20\times$ improvement in broadcast and $5\times$ improvement
in private bandwidth compared to the virtualized VSS.

Byte sizes assume Curve25519 for both VSS schemes, and BLS381 (48
byte group elements) for the current design. One last benefit to the
WRSS approach is that depending on the choice of signature scheme, it
may not be necessary to use a pairing-friendly curve (e.g.\ BLS12-381
as is currently in use) and instead, use a simpler curve (e.g.\
Curve25519 or sec256k1), which both simplifies the overall scheme and
reduces the bandwidth overhead.

\begin{table}[h]
\begin{center}
\caption{Comparison of bandwidth usage for current Ethereum signature broadcast, Feldman-VSS, and our WR-VSS with $0.02\%$
minimum stake and $t = 2/3 N$.
}
\begin{tabular}{l|r|r|r|r|r}
\multicolumn{1}{c|}{Design} & \multicolumn{3}{c|}{Broadcast}                          & \multicolumn{2}{c}{Private}       \\
 & $\mathbb{G}$ & $\mathbb{Z}_{p_0}$ & Total (B) & $\mathbb{Z}_{p_0}$ & Total (B) \\ \hline
Current    & $28,000$   &     & $1,344,000$ & &   \\ \hline
Feldman    & $6,850$    &     & $219,200$   &  $4,110$   & $131,520$   \\ \hline
WR VSS     & $389$      & $6$ & $12,640$    & $\sim 892$ & $28,528$ 
\end{tabular}
\label{tab:eth_bandwidth}
\end{center}
\end{table}

\paragraph{Implementation}
\label{sec:implementation}
To evaluate our scheme's practical performance, we implemented it in
Rust using the Bulletproofs library from \cite{dalekbp} for R1CS
proofs\footnote{R1CS, short for Rank-1 Constraint System, is another
term for arithmetic circuits.}. We tested on an Intel
Xeon Gold 6230 (2.10 GHz) running Ubuntu 20.04.2 LTS. Prover running
times for various configurations are shown in
Figure~\ref{fig:runtime}. As expected, time is linear in both $n$
(number of parties) and $m$ (lifting value). However, the constants
are large; even with this linear relationship, over 5 minutes is
needed for just 4 parties with $m=4$. We extrapolate from this that
applying our scheme to Ethereum would require hours of prover time.
Note that our implementation is not optimized, and the Bulletproofs
library is still in development, so significant improvements may be
possible.

The main advantage of our scheme is its small proof size. As shown in
Figure~\ref{fig:runtime}, proof size is logarithmic in both $m$ and
$n$, with small constants. We extrapolate that for the Ethereum
use cases, proof size should be under 2 KiB (excluding
commitments), aligning with theoretical expectations.

Thus, we have shown that our scheme achieves its primary goal of
proof size, though currently due to the poor performance of R1CS
libraries, the prover time is too high to be used in practice.

\begin{figure}[!h]
\begin{center}
\begin{subfigure}[b]{0.95\columnwidth}
    \pgfplotstableread[col sep=comma,]{data/runtime_m.csv}\runtimem
\begin{tikzpicture}
    \begin{axis}[
        width=0.9*\columnwidth,
        height=3.5cm,
        legend style={at={(0.02, 0.98)},anchor=north west},
        xlabel={$m$},
        ylabel={$t$ (s)},
        ]

        \addplot [scatter, only marks, mark=o, color=blue ] table [x={m}, y={1}]{\runtimem};
        \addplot[domain=1:4, samples=100, color=blue]{ -29.93567283 + x * 54.12952991 };

        \addplot [scatter, only marks, mark=x, color=red ] table [x={m}, y={2}]{\runtimem};
        \addplot[domain=1:4, samples=100, color=red]{ -29.93567283 + x * 2 * 54.12952991 };

        \addplot [scatter, only marks, mark=+, yellow ] table [x={m}, y={3}]{\runtimem};
        \addplot[domain=1:4, samples=100, color=yellow]{ -29.93567283 + x * 3 * 54.12952991 };

        \addplot [scatter, only marks, mark=*, green ] table [x={m}, y={4}]{\runtimem};
        \addplot[domain=1:4, samples=100, color=green]{ -29.93567283 + x * 4 * 54.12952991 };
    \end{axis}
\end{tikzpicture}
\subcaption{}
\label{fig:runtime_m}
\end{subfigure}
\begin{subfigure}[b]{0.95\columnwidth}
    \pgfplotstableread[col sep=comma,]{data/runtime_n.csv}\runtimen
\begin{tikzpicture}
    \begin{axis}[
        width=0.9*\columnwidth,
        height=3.5cm,
        legend style={at={(0.02, 0.98)},anchor=north west},
        xlabel={$n$},
        ylabel={$t$ (s)},
        ]

        \addplot [scatter, only marks, mark=o, color=blue ] table [x={n}, y={1}]{\runtimen};
        \addplot[domain=1:4, samples=100, color=blue]{ -29.93567283 + x * 54.12952991 };

        \addplot [scatter, only marks, mark=x, color=red ] table [x={n}, y={2}]{\runtimen};
        \addplot[domain=1:4, samples=100, color=red]{ -29.93567283 + x * 2 * 54.12952991 };

        \addplot [scatter, only marks, mark=+, yellow ] table [x={n}, y={3}]{\runtimen};
        \addplot[domain=1:4, samples=100, color=yellow]{ -29.93567283 + x * 3 * 54.12952991 };

        \addplot [scatter, only marks, mark=*, green ] table [x={n}, y={4}]{\runtimen};
        \addplot[domain=1:4, samples=100, color=green]{ -29.93567283 + x * 4 * 54.12952991 };

    \end{axis}
\end{tikzpicture}
\subcaption{}
\label{fig:runtime_n}
\end{subfigure}
\begin{subfigure}[b]{0.95\columnwidth}
\pgfplotstableread[col sep=comma,]{data/proof_size.csv}\proofsize
\begin{tikzpicture}
\begin{axis}[
    width=0.9*\columnwidth,
    height=3.5cm,
    legend style={at={(0.02, 0.98)},anchor=north west},
    xlabel={$m \times n$},
    ylabel={Proof Size (B)},
    ]

    \addplot [scatter, only marks, mark=o, color=blue ] table [x={nm}, y={proof}]{\proofsize};
    \addplot[domain=1:16, samples=100, color=blue]{ 1168.32346147 + 106.1508379 * ln(x) };
\end{axis}
\end{tikzpicture}
\subcaption{}
\label{fig:proof_size}
\end{subfigure}
\caption{VSS Measurements: (a) prover time vs $m$ (size of the lifted secret),
(b) prover time vs $n$ (number of parties), (c) proof size.}
\label{fig:runtime}
\end{center}
\end{figure}
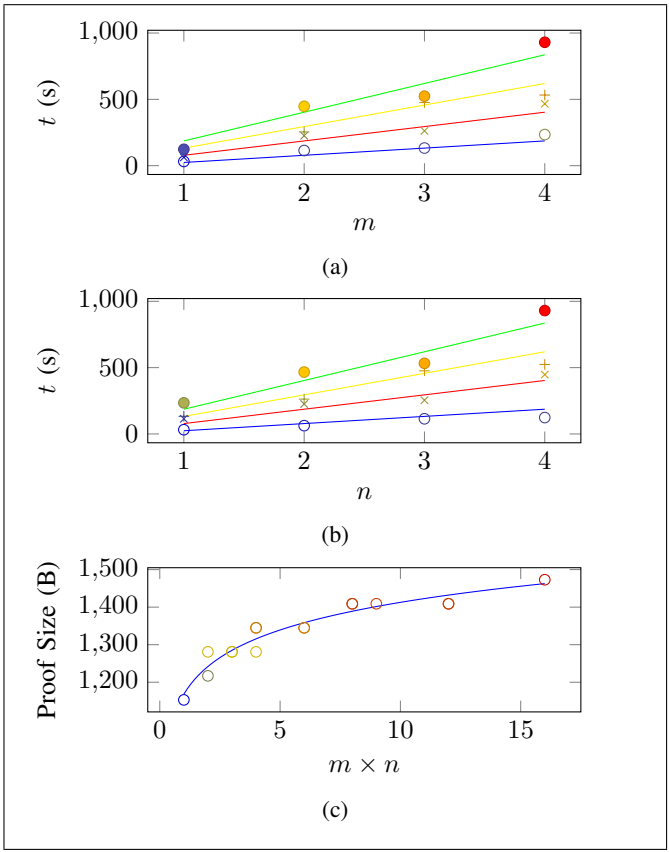

\section{Conclusions and Future Work}
\label{sec:conclusions}

In this work we have shown how to construct a Verifiable
Weighted-Ramp Secret Sharing scheme. Our scheme is based on the CRT
and is the first to provide verifiability for a weighted secret
sharing scheme without a trusted setup, unknown order groups, or the
strong RSA assumption. Along the way we developed a novel technique
for proving congruence relationships between committed values in zero
knowledge that may be of independent interest. We have shown that our
scheme is efficient, with communication costs scaling logarithmically
in all parameters, and running times linear in all parameters.

At present, the scheme is implemented using the Bulletproofs
arithmetic circuit proof system, which introduces large constants in
the prover time. Future work may optimize this by using more
efficient arithmetic circuit proof systems, at which point the prover
time may be more practical for large systems. It may also be possible
to trade off some of the prover time for increased proof size.
Finally, future work may consider the case that the a party's prime
value is greater than the group order, which would require a
different proof-of-mod system, in order to enable larger dynamic
ranges.

\bibliographystyle{IEEEtranS}
\bibliography{paper}

\begin{thebibliography}{10}
\providecommand{\url}[1]{#1}
\csname url@samestyle\endcsname
\providecommand{\newblock}{\relax}
\providecommand{\bibinfo}[2]{#2}
\providecommand{\BIBentrySTDinterwordspacing}{\spaceskip=0pt\relax}
\providecommand{\BIBentryALTinterwordstretchfactor}{4}
\providecommand{\BIBentryALTinterwordspacing}{\spaceskip=\fontdimen2\font plus
\BIBentryALTinterwordstretchfactor\fontdimen3\font minus \fontdimen4\font\relax}
\providecommand{\BIBforeignlanguage}[2]{{%
\expandafter\ifx\csname l@#1\endcsname\relax
\typeout{** WARNING: IEEEtranS.bst: No hyphenation pattern has been}%
\typeout{** loaded for the language `#1'. Using the pattern for}%
\typeout{** the default language instead.}%
\else
\language=\csname l@#1\endcsname
\fi
#2}}
\providecommand{\BIBdecl}{\relax}
\BIBdecl

\bibitem{asmuth_bloom}
C.~Asmuth and J.~Bloom, ``A modular approach to key safeguarding,'' \emph{IEEE Transactions on Information Theory}, vol.~29, no.~2, pp. 208--210, 1983.

\bibitem{zkp_ckts}
\BIBentryALTinterwordspacing
J.~Bootle, A.~Cerulli, P.~Chaidos, J.~Groth, and C.~Petit, ``Efficient zero-knowledge arguments for arithmetic circuits in the discrete log setting,'' Cryptology ePrint Archive, Paper 2016/263, 2016, \url{https://eprint.iacr.org/2016/263}. [Online]. Available: \url{https://eprint.iacr.org/2016/263}
\BIBentrySTDinterwordspacing

\bibitem{oraclenets}
e.~a. Breidenbach, L., ``Chainlink 2.0: Next steps in the evolution of decentralized oracle networks,'' \url{https://research.chain.link/whitepaper-v2.pdf}, 2021.

\bibitem{eth_sigs}
V.~Buterin, ``{S}ticking to 8192 signatures per slot post-{S}{S}{F}: how and why --- ethresear.ch,'' \url{https://ethresear.ch/t/sticking-to-8192-signatures-per-slot-post-ssf-how-and-why/17989}, 2023, [Accessed 28-01-2024].

\bibitem{bulletproofs}
B.~Bünz, J.~Bootle, D.~Boneh, A.~Poelstra, P.~Wuille, and G.~Maxwell, ``Bulletproofs: Short proofs for confidential transactions and more,'' in \emph{2018 IEEE Symposium on Security and Privacy (SP)}, 2018, pp. 315--334.

\bibitem{dalekbp}
T.~D.~C. Developers, ``Bulletproofs in rust,'' \url{https://github.com/dalek-cryptography/bulletproofs}, 2024.

\bibitem{feldmanvss}
P.~Feldman, ``A practical scheme for non-interactive verifiable secret sharing,'' in \emph{28th Annual Symposium on Foundations of Computer Science (sfcs 1987)}.\hskip 1em plus 0.5em minus 0.4em\relax IEEE, 1987, pp. 427--438.

\bibitem{fiatshamir}
A.~Fiat and A.~Shamir, ``How to prove yourself: Practical solutions to identification and signature problems,'' in \emph{Advances in Cryptology --- CRYPTO' 86}, A.~M. Odlyzko, Ed.\hskip 1em plus 0.5em minus 0.4em\relax Berlin, Heidelberg: Springer Berlin Heidelberg, 1987, pp. 186--194.

\bibitem{crypto_w_weights}
S.~Garg, A.~Jain, P.~Mukherjee, R.~Sinha, M.~Wang, and Y.~Zhang, ``Cryptography with weights: Mpc, encryption and signatures,'' in \emph{Advances in Cryptology -- CRYPTO 2023}, H.~Handschuh and A.~Lysyanskaya, Eds.\hskip 1em plus 0.5em minus 0.4em\relax Cham: Springer Nature Switzerland, 2023, pp. 295--327.

\bibitem{ethereum-pos}
\BIBentryALTinterwordspacing
D.~Grandjean, L.~Heimbach, and R.~Wattenhofer, ``Ethereum proof-of-stake consensus layer: Participation and decentralization,'' 2023. [Online]. Available: \url{https://arxiv.org/abs/2306.10777}
\BIBentrySTDinterwordspacing

\bibitem{kaya_crt_ss}
\BIBentryALTinterwordspacing
K.~Kaya and A.~A. Selçuk, ``Secret sharing extensions based on the chinese remainder theorem,'' Cryptology ePrint Archive, Paper 2010/096, 2010, \url{https://eprint.iacr.org/2010/096}. [Online]. Available: \url{https://eprint.iacr.org/2010/096}
\BIBentrySTDinterwordspacing

\bibitem{mignotte}
M.~Mignotte, ``How to share a secret,'' in \emph{Cryptography}, T.~Beth, Ed.\hskip 1em plus 0.5em minus 0.4em\relax Berlin, Heidelberg: Springer Berlin Heidelberg, 1983, pp. 371--375.

\bibitem{vss}
\BIBentryALTinterwordspacing
B.~Schoenmakers, \emph{Verifiable Secret Sharing}.\hskip 1em plus 0.5em minus 0.4em\relax Boston, MA: Springer US, 2005, pp. 645--647. [Online]. Available: \url{https://doi.org/10.1007/0-387-23483-7_452}
\BIBentrySTDinterwordspacing

\bibitem{sss}
\BIBentryALTinterwordspacing
A.~Shamir, ``How to share a secret,'' \emph{Commun. ACM}, vol.~22, no.~11, p. 612–613, Nov. 1979. [Online]. Available: \url{https://doi.org/10.1145/359168.359176}
\BIBentrySTDinterwordspacing

\bibitem{eth_staking}
M.~Sharma, ``{H}ow to {S}take {E}thereum --- investopedia.com,'' \url{https://www.investopedia.com/how-to-stake-ethereum-7482623}, 2023, [Accessed 28-01-2024].

\bibitem{pvss}
M.~Stadler, ``Publicly verifiable secret sharing,'' in \emph{Advances in Cryptology --- EUROCRYPT '96}, U.~Maurer, Ed.\hskip 1em plus 0.5em minus 0.4em\relax Berlin, Heidelberg: Springer Berlin Heidelberg, 1996, pp. 190--199.

\end{thebibliography}

\newpage
\appendix
\section{Definitions}
\label{sec:definitions}

\subsection{Commitments}
\label{sec:commitments}

We will use the formal definition of commitments as below. In
general, Pedersen commitments are used in practice in order to use
the same discrete log group for commitments and the rest of the
system.

\begin{definition}[Commitment]

  A non-interactive commitment scheme is a pair of randomized
  polynomial time algorithms $(\text{Setup}, \text{Com})$:

  \begin{itemize}[leftmargin=*]

    \item $\textit{Setup}(1^\lambda) \rightarrow \text{pp}$: Takes as
    input a security parameter $\lambda$ and outputs the parameters
    for the commitment scheme, $pp$.

    \item $\textit{Com}_{pp}(x; r) \rightarrow c$ takes as input a message $x
    \in \mathcal{M}_{pp}$ and randomness $r \in \mathcal{R}_{pp}$ and
    outputs a commitment $c \in \mathcal{C}_{pp}$.

  \end{itemize}

\end{definition}

For ease of notation we will often omit the subscript $pp$ when the
setup step is clear. In general, it is expected that $r$ is drawn
uniformly at random from the randomness space $\mathcal{R}_{pp}$ when
generating a fresh commitment.

\begin{definition}[Binding Commitment]

  A commitment scheme is said to be binding if for all PPT
  adversaries $\mathcal{A}$ there exists a negligible function $\mu$
  such that:

  \[
    \Pr\left[
      \begin{array}{l}
        \commit{x_0; r_0} \\
        = \commit{x_1; r_1} \\
        \wedge x_0 \neq x_1
      \end{array}
    \middle\vert
    \begin{array}{l}
      pp \leftarrow \text{Setup}(1^\lambda), \\
      x_0, r_0, x_1, r_1 \leftarrow \mathcal{A}(pp)
    \end{array}
    \right]
    \le \mu(\lambda)
  \]

  Where probability is taken over all random coins of Setup and
  $\mathcal{A}$. If $\mu(\lambda) = 0$ then we say that the scheme is
  perfectly binding.

\end{definition}

\begin{definition}[Hiding Commitments]

  A commitment scheme is said to be hiding if for all PPT adversaries
  $\mathcal{A}_1, \mathcal{A}_2$ there exists a negligible function
  $\mu$ such that:

\[
  \left|
  \Pr\left[
    b = b'
  \middle\vert
  \begin{array}{l}
    pp \leftarrow \text{Setup}(1^\lambda), 
    b \xleftarrow{\$} \{0,1\}, \\
    r \xleftarrow{\$} \mathcal{R}_{pp}, 
    x_0, x_1 \leftarrow \mathcal{A}_1(pp), \\
    c \leftarrow \commit{x_b; r}, 
    b' \leftarrow \mathcal{A}_2
  \end{array}
  \right]
  - \frac{1}{2}
  \right|
  \le \mu(\lambda)
\]
  
Where probability is taken over $b, r$, and all random coins of Setup
and $\mathcal{A}$. If $\mu(\lambda) = 0$ then we say that the scheme
is perfectly hiding.

\end{definition}

\paragraph{Vector Commitments} A vector commitment scheme is simply
a commitment scheme where the message space $\mathcal{M}$ is a vector
space, such as $\mathbb{Z}_p^n$. The definitions for homomorphic,
binding, and hiding commitments apply equally to vector commitments.

\subsection{Commitment Wraparound}
\label{sec:commit_wrap}

\begin{definition}[Homomorphic Commitments]

  A homomorphic commitment scheme is a non-interactive commitment
  scheme such that $\mathcal{M}_{pp}, \mathcal{R}_{pp},
  \mathcal{C}_{pp}$ are all abelian groups, and for all $x, y \in
  \mathcal{M}_{pp}$ and $r_x, r_y \in \mathcal{R}_{pp}$:

  \[
   \commit{x; r_x} \otimes \commit{y; r_y} = \commit{x \oplus y; r_x \oplus r_y}
  \]

  Where $\otimes$ denotes the group operation in $\mathcal{C}_{pp}$ and
  $\oplus$ denotes the group operation in $\mathcal{M}_{pp}$ or
  $\mathcal{R}_{pp}$ respectively.

\end{definition}

\paragraph{Wraparound} Notice that by the properties of abelian
groups, homomorphic commitments over a finite commitment space such
that $|\mathcal{C}| = p$ must necessarily wrap around, such that
$\commit{x + p; r} = \commit{x; r}$. This is obvious for Pedersen
commitments, but to see that this is true in general, consider a
generator $g \in \mathcal{C}$ such that $\commit{\mathbb{1}; 0} = g$,
for some interpretation of $\mathbb{1}$. For any integer value $p$ we
can write $y = \mathbb{1} \cdot p$ where scalar multiplication in
this case is interpretted as the group operation repeated $p$ times.
We can thus write:

\begin{align*}
  \commit{x + p; r} &= \commit{x; r} \otimes \commit{\mathbb{1} \cdot p; 0} \\
  &= \commit{x; r} \otimes g^p
\end{align*}

By Fermat's theorem, $g^p$ is the identity element of the group.

As a concrete example, many proofs including Bulletproofs, use
Pedersen Commitments, which for a group $\mathbb{G}$ with generators
$g, h$ (generated by Setup) of prime order $p$ in which the discrete
logarithm problem is hard, are defined as:

\[ \commit{x; r} := g^x h^r \]

Pedersen commitments are homomorphic, \emph{computationally} binding,
and \emph{perfectly} hiding. Note that the discrete logarithm of $h$
with respect to $g$ and vice versa must be unknown to the prover.
This notion can be extended to a vector commitment by having setup
generate a vector of generators such that:

\[
  \commit{\vec{x}; r} := \vec{g}^{\vec{x}} h^r
  = h^r \prod_{i=1}^n g_i^{x_i}
\]

\subsection{Zero-Knowledge Arguments}
\label{sec:zk}

Intuitively, a zero-knowledge proof of knowledge is an interactive
protocol between a prover P and a verifier V such that on input $x$
the prover proves to the verifier that it knows a witness $w$ that
satisfies a relation $\mathcal{R}(x, w)$, without revealing anything
else about $w$. If the proof is sound for only computationally
bounded provers, we call it a zero-knowledge argument (ZKA). If the
verifier keeps no secrets, then we say the proof (or argument) is
public coin. In this paper we consider only public-coin
zero-knowledge arguments, and use the following formal definition.

\begin{definition}[Zero-Knowledge Arguments]
  \label{def:zk}

  Let $\mathcal{R}$ be an NP relation. A zero-knowledge argument
  (ZKA) $\Pi$ for $\mathcal{R}$ is an interactive protocol between a
  prover $P$ and a verifier $V$ consisting of $k$ prover messages and
  $k-1$ verifier messages. The prover is defined by a family of
  randomized algorithms $P = \{P_i\}_{i \in [k]}$, while the verifier
  is defined by a predicate $\phi$ such that:

  \begin{itemize}[leftmargin=*]

    \item The prover $P$ on the $i$-th round takes as input $x$ and
    the witness $w$ along with all previous challenge messages $c_j$
    for $j \in [i-1]$ and a random input $r_i$ and outputs a message
    $m_i$. Written formally:

    \[ P_i(x, w, \{c_j\}_{j \in [i-1]}; r_i) \rightarrow m_i,
    \forall i \in [k] \]

    \item The verifier $V$ on the $i$-th round samples a random
    challenge $c_i \xleftarrow{\$} \{0,1\}^\kappa,\  \forall i \in
    [k-1]$ and sends it to the prover.

    \item After receiving all $k$ messages from the prover, the
    verifier outputs $b \leftarrow \phi(x, \{m_i\}_{i \in [k]},
    \{c_i\}_{i \in [k-1]})$.

  \end{itemize}

  We denote by $\langle P', V' \rangle$ the random variable
  corresponding to the output $b$ of $V'$ after interacting with $P'$
  in the execution of $\Pi$, and by $\text{TR}(P', V') = (m_1, c_1,
  \ldots, c_{k-1}, m_k)$ the transcript of the execution of $\Pi$ in
  which $V'$ interacts with $P'$. We define the following security
  notions:

  \emph{Completeness}: We say that $\Pi$ has perfect
  completeness if for all $(x, w) \in \mathcal{R}$:
  \[ \Pr[\langle P(x, w), V(x) \rangle = 1] = 1 \]

  \emph{Computational Soundness}: We say that $\Pi$ has
  computational soundness if for all $(x, w) \notin \mathcal{R}$
  and PPT provers $P^*$:
  \[ \Pr[\langle P^*(x, w), V(x) \rangle = 1] \leq negl(\lambda) \]

  \emph{Special Honest Verifier Zero-Knowledge (SHVZK)}: We
  say that $\Pi$ has SHVZK if there exists a PPT simulator
  $\mathcal{S}$ such that for any $(x, w) \in \mathcal{R}$ the
  simulator $\mathcal{S}$ on input $x$ outputs a transcript that is
  indistinguishable from the transcript of the honest prover $P$
  run with input $x$ and witness $w$. Written formally:
  \[
    \{ \text{TR}(P(x, w), V(x)) \} \approx_c \{ \mathcal{S}(x) \}
  \]

\end{definition}

\subsection{Arithmetic Circuit Proofs}
\label{sec:ckt_proofs}

Informally, an arithmetic circuit proof is a zero-knowledge argument
that a set of committed values satisfy a given arithmetic circuit. As
we discussed in Section~\ref{sec:overview}, we will make extensive
use of an efficient proof of circuit satisfiability such as the
construction given in \cite{zkp_ckts} and refined in
\cite{bulletproofs}. We assume that the commitments and the circuit
operate on some field $\mathbb{F}$, which in practical terms will
generally be $\mathbb{Z}_p$ for some prime $p$.

\sloppy Consider a set of $n$ values $(v_1, \ldots, v_n)$, with commitment
vector $\vec{V} = (V_1, \ldots, V_n) \in \mathcal{C}^n, V_i =
\commit{v_i; r_{v_i}}$. We can define an arithmetic circuit by
starting with $m$ multiplication gates. We label the inputs to the
$i$-th gate as $a_i$ and $b_i$, and the output as $c_i$. We can
represent the circuit using the following two sets of equations:
\begin{align}
  a_i \cdot b_i &= c_i &1 \le i \le m \\
  \sum_{i = 1}^m w_{j, a, i} a_i 
  &+ \sum_{i = 1}^m w_{j, b, i} b_i \nonumber \\
  + \sum_{i = 1}^m w_{j, c, i} c_i 
  &= \sum_{i = 1}^n w_{j, v, i} v_i + k_j &1 \le j \le q
\end{align}
Where $w_{j, x, i}$ are weights determined by the circuit. We can
write this more succinctly by grouping the values into vectors and
matrices in the obvious way.
\begin{align}
  \label{eqn:ckt_sat_1}
  \vec{a} \circ \vec{b} &= \vec{c} \\
  \label{eqn:ckt_sat_2}
  \mathbf{W}_a \cdot \vec{a}
  + \mathbf{W}_b \cdot \vec{b}
  + \mathbf{W}_c \cdot \vec{c}
  &= \mathbf{W}_v \cdot \vec{v} + \vec{k}
\end{align}

We say that a circuit given by $\mathbf{W}_a, \mathbf{W}_b,
\mathbf{W}_c, \mathbf{W}_v$ is satisfied by values $\vec{v}$ if there
exists $\vec{a}, \vec{b}, \vec{c}$ such that equations
\ref{eqn:ckt_sat_1} and \ref{eqn:ckt_sat_2} hold. This leads to a
natural definition of a circuit proof.

\begin{definition}[Arithmetic Circuit Proof]
  \label{def:ckt_proof}

  Given a commitment scheme $(\text{Setup}, \text{Com})$ an
  arithmetic circuit proof is a zero-knowledge argument of knowledge
  for the relation $\mathcal{R}_{\text{ckt}}$:
\begin{multline}
  \mathcal{R}_{\text{ckt}} (
    (\vec{V} \in \mathcal{C}^n,
      \mathbf{W}_a, \mathbf{W}_b,
        \mathbf{W}_c \in \mathbb{F}^{q \times m},
      \mathbf{W}_v \in \mathbb{F}^{q \times n}, \\
      \vec{k} \in \mathbb{F}^q),
    (\vec{v} \in \mathbb{F}^n,
      \vec{a}, \vec{b}, \vec{c} \in \mathbb{F}^m,
      \vec{r}_v \in \mathcal{R}^n)) \\
  := \\
    V_i = \commit{v_i, r_{v_i}} \forall i \in [1,n] \\
    \wedge \vec{a} \circ \vec{b} = \vec{c} \\
  \wedge \mathbf{W}_a \cdot \vec{a}
  + \mathbf{W}_b \cdot \vec{b}
  + \mathbf{W}_c \cdot \vec{c}
  = \mathbf{W}_v \cdot \vec{v} + \vec{k}
\end{multline}

\end{definition}

\begin{theorem}

  There exists an arithmetic circuit proof protocol
  $\Pi_{\text{ckt}}$ with perfect completeness, computational
  soundness, and perfect special honest verifier zero-knowledge for
  $\mathcal{R}_{\text{ckt}}$. Moreover, $\Pi_{\text{ckt}}$ has has
  communication complexity $O(\log m)$.

\end{theorem}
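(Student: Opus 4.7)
The plan is to instantiate $\Pi_{\text{ckt}}$ as the Bulletproofs-style arithmetic circuit argument and then verify the three security properties plus the communication bound. Throughout, I will assume Pedersen (vector) commitments over a prime-order group $\mathbb{G}$ in which discrete logarithm is hard, since these satisfy the homomorphic, perfectly hiding, computationally binding properties the protocol relies on.

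First, I would describe the protocol. Given $(\vec{V},\mathbf{W}_a,\mathbf{W}_b,\mathbf{W}_c,\mathbf{W}_v,\vec{k})$, the prover commits to $\vec{a},\vec{b},\vec{c}$ via a single vector Pedersen commitment with blinding, then the verifier sends a challenge $y \in \mathbb{F}^*$ used to combine the $q$ linear constraints into one inner-product identity in $y$, and a challenge $z$ used to combine the Hadamard constraint $\vec{a}\circ\vec{b}=\vec{c}$ with the linear combination so that everything reduces to checking a single identity of the form $\langle \vec{l}(X),\vec{r}(X)\rangle = t(X)$ for committed polynomial vectors $\vec{l},\vec{r}$. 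The prover then commits to the nonzero coefficients of $t(X)$, the verifier sends an evaluation point $x$, the prover opens $\vec{l}(x),\vec{r}(x),t(x)$ with blinding, and the verifier checks the polynomial identity at $x$ and invokes the inner product argument on $\vec{l}(x),\vec{r}(x)$. This is exactly the structure of~\cite{bulletproofs}, so the subprotocols and their correctness are off the shelf.

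Next I would prove each property in turn. \emph{Perfect completeness} is a direct calculation: if $\vec{v},\vec{a},\vec{b},\vec{c}$ satisfy $\mathcal{R}_{\text{ckt}}$, then the polynomial identity $\langle \vec{l}(X),\vec{r}(X)\rangle = t(X)$ holds identically, so every verifier check succeeds with probability one, and the inner product subprotocol inherits its perfect completeness. \emph{Perfect SHVZK} follows by the standard simulator: given challenges $y,z,x$ and the public inputs, the simulator samples $\vec{l}(x),\vec{r}(x),t(x)$ and the blinding factors uniformly at random, then back-solves the single group equation for the commitment to $t(X)$'s coefficients using the homomorphism and discrete-log knowledge of setup generators. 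Because Pedersen commitments are perfectly hiding and all challenges are public coin, the resulting transcript has exactly the same distribution as an honest execution. \emph{Computational soundness} is where the real work lies: I would invoke the witness-extended emulation argument of~\cite{bulletproofs}, which in turn relies on a forking-lemma style extractor that rewinds the prover across sufficiently many challenge transcripts to recover $\vec{a},\vec{b},\vec{c},\vec{v}$, and then uses the binding property of the commitments together with the Schwartz--Zippel lemma (a polynomial of degree polynomial in the circuit size is zero at a random $x$ with probability at most $\text{poly}/|\mathbb{F}|$) to conclude that any accepting non-witness transcript yields either a break of the discrete log relation among generators or a non-trivial root of a low-degree polynomial.

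The main obstacle is the soundness step: constructing the tree of accepting transcripts needed by the extractor and arguing that at each level of the recursive inner-product halving a cheating prover who does not know a valid opening must break the discrete log assumption on the Pedersen generators. For the communication bound, once the protocol is specified, the count is immediate: the non-inner-product messages are a constant number of group and field elements, and the recursive inner-product argument halves its witness each round, contributing $2\lceil\log_2 m\rceil$ group elements, giving overall communication $O(\log m)$ and completing the theorem.
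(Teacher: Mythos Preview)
Your proposal is correct and takes the same approach as the paper: instantiate $\Pi_{\text{ckt}}$ with the Bulletproofs arithmetic circuit argument and inherit its security properties and $O(\log m)$ communication. In fact the paper does not give a proof at all --- it simply states ``We refer the reader to \cite{bulletproofs} for the security proof'' --- so your sketch of the protocol structure, the simulator for SHVZK, and the witness-extended-emulation extractor for soundness is already more detailed than what the paper provides.
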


We refer the reader to \cite{bulletproofs} for the security proof.

\paragraph{Notation} We will generally write $\textbf{CKT}$ in place
of $(\mathbf{W}_a, \mathbf{W}_b, \mathbf{W}_c, \mathbf{W}_v,
\vec{k})$ when the context is clear, implying that the conditions
above are satisfied. For example, invoking the circuit proof protocol
we will write $\Pi_{\text{ckt}} \left[ (\vec{V}, \textbf{CKT});
(\vec{v}, \vec{a}, \vec{b}, \vec{c}, \vec{r}_v )\right]$.

\subsection{Verifiable Secret Sharing}
\label{sec:vss_defn}

Intuitively, a verifiable secret sharing scheme (VSS) is a secret
sharing scheme with added functionality that allows the participants
to verify both the dealer actions and the shares received by other
participants. Let $pp$ be the parameters of the scheme, including the
number of shares $n$ and any parameters needed to determine the set
of authorized access sets as $\mathcal{A}_{pp} = \{ A \subseteq [n] |
A \text{ authorized to reconstruct the secret} \}$, and similarly
$\bar{\mathcal{A}}_{pp}$ the set of all access sets explicitly
unauthorized to reconstruct the secret, where $\mathcal{A}_{pp} \cap
\bar{\mathcal{A}}_{pp} = \emptyset$. We will consider only
non-interactive VSS, which is composed of two programs:

\begin{enumerate}[leftmargin=*]
  
  \item $\text{Share}_{pp}(s) \rightarrow ((v_1, \ldots, v_n), \pi)$,
  the sharing program takes as input a secret value $s$ and produces
  $n$ shares of the secret and a proof $\pi$. The proof $\pi$ is
  broadcast to all parties, while $v_i$ is given only to party $i$.

  \item $\text{Reconstruct}_{pp}(\{(i, v_i \in \mathcal{K})\}_{i \in
  A}, \pi) \rightarrow s' \in \mathcal{K} | \bot$, the reconstruction
  programs take a set of shares and reconstructs the secret as $s'$
  or $\bot \notin \mathcal{K}$.

\end{enumerate}

In practice there will generally also be a verification program for
accepting the shares. As this is not needed for the security of our
VSS, and should be quite clear from the construction, we omit the
formal definition.

\begin{definition}
\label{def:vss-sec}

We say that a VSS is secure if for any legal parameters $pp$ that
contains a security parameter $\lambda$, the following properties
hold:

\paragraph{Secrecy} For any unauthorized set $\bar{A} \in
\bar{\mathcal{A}}_{pp}$, and any two secrets $s, s'$ , the
statistical distance between the distribution of the shares produced
by $\text{Share}_{pp}(s)$ and $\text{Share}_{pp}(s')$ is negligible.
That is:

\begin{align}
\begin{split}
  SD (
  \left\{ \pi, \{v_i\}_{i \in \bar{A}} |
    (v_1, \ldots, v_n, \pi) 
  \leftarrow \text{Share}_{pp}(s) \right\}, \\
  \left\{ \pi', \{v'_i\}_{i \in \bar{A}} |
    (v'_1, \ldots, v'_n, \pi') \leftarrow \text{Share}_{pp}(s') \right\}
    ) \\
  \le \text{negl}(\lambda)
\end{split}
\end{align}

\paragraph{Correctness} For any authorized set $A \in \mathcal{A}$,
and any secret $s$, if $(v_1, \ldots, v_n, \pi) \leftarrow
\text{Share}_{pp}(s)$ then:
\[
  \text{Reconstruct}_{pp}(\{v_i\}_{i \in A}, \pi) = s
\]

\paragraph{Commitment} For any PPT adversary $D^*$ taking input $pp$
and outputing $\pi, \{v_i\}_{i \in A}, \{v'_i\}_{i \in A'}$ the
probability that the two sets of shares reconstruct different secrets
is negligible. That is:

\begin{multline}
  \Pr[\bot \neq
  \text{Reconstruct}_{pp}(\{v_i\}_{i \in A}, \pi) \\ \neq
  \text{Reconstruct}_{pp}(\{v_i\}_{i \in A'}, \pi) \neq
  \bot] \le \text{negl}(\lambda)
\end{multline}

\end{definition}

\section{Base Proof-of-Mod Details}
\label{app:base_pom}

\subsection{Proof of Range Equations}
\label{sec:range_eqns}

\begin{lemma}
\label{lem:bit_decomp}

  Let $v$ be an integer, and $\mathbb{F}$ a field of size at least 2.
  For $n > 0 \in \mathbb{N}$, $v \in [0, 2^n]$ if and only If there
  exists vectors $\vec{a}, \vec{b} \in \mathbb{F}^n$ such that the
  following equations hold:

  \begin{align}
    \label{eqn:bit_decomp_1}
    \sum_{i = 1}^{n} a_i 2^{i-1} &= v \\
    \label{eqn:bit_decomp_2}
    a_i \cdot b_i &= 0 &1 \le i \le n \\
    \label{eqn:bit_decomp_3}
    a_i - b_i - 1 &= 0 &1 \le i \le n
  \end{align}

\end{lemma}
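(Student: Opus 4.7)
The plan is to prove both directions separately. The key observation driving everything is that equations (\ref{eqn:bit_decomp_2}) and (\ref{eqn:bit_decomp_3}) together encode the constraint that each $a_i$ is a bit, after which equation (\ref{eqn:bit_decomp_1}) is just the standard binary expansion.

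For the forward direction (``only if''), suppose $v \in [0, 2^n]$. I would write the standard binary decomposition $v = \sum_{i=1}^n a_i 2^{i-1}$ with $a_i \in \{0,1\} \subset \mathbb{F}$ (valid because $v < 2^n$ fits in $n$ bits, interpreted as elements of $\mathbb{F}$). Then set $b_i := a_i - 1$. Equation (\ref{eqn:bit_decomp_1}) holds by construction, equation (\ref{eqn:bit_decomp_3}) holds by the definition of $b_i$, and equation (\ref{eqn:bit_decomp_2}) reduces to $a_i(a_i - 1) = 0$, which holds since $a_i \in \{0,1\}$.

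For the reverse direction (``if''), suppose $\vec{a}, \vec{b} \in \mathbb{F}^n$ satisfy all three equations. From (\ref{eqn:bit_decomp_3}) I get $b_i = a_i - 1$ for every $i$. Substituting into (\ref{eqn:bit_decomp_2}) yields $a_i(a_i - 1) = 0$ in $\mathbb{F}$. Since $\mathbb{F}$ is a field (hence has no zero divisors), this forces $a_i \in \{0, 1\}$. Plugging back into (\ref{eqn:bit_decomp_1}) then expresses $v$ as a sum of at most $n$ terms, each either $0$ or $2^{i-1}$, which ranges over $[0, 2^n - 1] \subseteq [0, 2^n]$.

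The only subtlety worth flagging is the interpretation of equation (\ref{eqn:bit_decomp_1}): it is an equality in $\mathbb{F}$, while the conclusion $v \in [0, 2^n]$ is a statement about integers. Implicitly we rely on the natural map $\mathbb{Z} \to \mathbb{F}$ being injective on $[0, 2^n)$, which in the Bulletproofs setting is guaranteed because $|\mathbb{F}|$ is taken to be much larger than $2^n$ (the stated hypothesis $|\mathbb{F}| \ge 2$ alone is not enough if one wants a bijection between integer $v$ and the bit pattern, but it does suffice to make the factorization $a_i(a_i-1)=0 \Rightarrow a_i\in\{0,1\}$ valid, which is all the reverse direction formally needs). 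No deep obstacle arises; the argument is essentially bookkeeping once the $a_i \in \{0,1\}$ reduction is made explicit.
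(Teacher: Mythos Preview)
Your proposal is correct and follows essentially the same approach as the paper: substitute (\ref{eqn:bit_decomp_3}) into (\ref{eqn:bit_decomp_2}) to obtain $a_i(a_i-1)=0$, deduce $a_i\in\{0,1\}$, and bound the sum in (\ref{eqn:bit_decomp_1}). Your write-up is in fact more complete than the paper's, since you spell out the forward direction and flag the $|\mathbb{F}|$ versus integer-interpretation subtlety that the paper leaves implicit.
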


\begin{proof}

Rearranging (\ref{eqn:bit_decomp_3}) and
substituting into (\ref{eqn:bit_decomp_2}) gives:
\[
  a_i (a_i - 1) = 0
\]

Thus either $a_i = 0$ or $a_i = 1$. Applying this to
(\ref{eqn:bit_decomp_1}):

\[
  0 \le \sum_{i = 1}^{n} a_i 2^{i-1} = v \le \sum_{i = 1}^{n} 2^{i-1} < 2^n
\]
\end{proof}

\subsection{Base Proof-of-Mod Circuit Equations}
\label{sec:pom_eqns}

Inputs: $v, s$. \\
Constants: $p_0, n_1, n_2 \in \mathbb{N}, p, q \in \mathbb{Z}_{p_0}, t \in \mathbb{Z}_{p}$, such that:

\begin{align*}
  p_0 = pq + t \\
  2^{n_1} \ge p \\
  2^{n_2} \ge q
\end{align*}

\begin{figure*}[h!]
\begin{align*}
\sum_{i = 1}^{n_2} p 2^{i-1} a_{i} &= s - v &(pk = s - v) \\
\sum_{i = 1}^{n_2} 2^{i-1} a_{i}
  + \sum_{i = 1}^{n_2} 2^{i-1} a_{n_2 + i}
  &= q &(k \le q) \\
\sum_{i = 1}^{n_1} 2^{i-1} a_{2 n_2 + i} &= v &(0 \le v) \\
\sum_{i = 1}^{n_1} 2^{i-1} a_{n_1 + 2 n_2 + i} &= p - v - 1 &(v \le p - 1) \\
\sum_{i = 1}^{n_2} 2^{i-1} a_{i}
  + \sum_{i = 1}^{n_2} 2^{i-1} a_{2 n_1 + 2 n_2 + i}
  &= q  - 1 &(k \le q - 1) \\
\sum_{i = 1}^{n_1} 2^{i-1} a_{2 n_1 + 3 n_2 + i} &= t - v - 1 &(v \le t - v) \\
a_i - b_i - 1 &= 0 &1 \le i \le 2 n_1 + 2 n_2 \\
\sum_{i = 1}^{n_2} z^{i-1} (a_{2 n_1 + 2 n_2 + i} - b_{2 n_1 + 2 n_2 + i} - 1)
- a_{3 n_1 + 3 n_2 + 1} &= 0 \\
\sum_{i = 1}^{n_1} z^{i - 1} (a_{2 n_1 + 3 n_2 + i} - b_{2 n_1 + 3 n_2 + i} - 1)
- b_{3 n_1 + 3 n_2 + 1} &= 0 \\
c_i &= 0 &1 \le i \le 3n_1 + 3n_2 + 1 \\
\end{align*}
\caption{Base Proof-of-Mod Circuit Equations.}
\label{fig:pom_eqns}
\end{figure*}

Equations for the base proof-of-mod circuit are shown in
Figure~\ref{fig:pom_eqns}, written n the form required for
$\Pi_{ckt}$ in Section~\ref{sec:ckt_proofs}. The equations in
brackets on the right are meant as an intuitive explanation.
Equations eliminate $k$ as an intermediate value, and use $a_1,
\ldots, a_{n_2}$ instead. This construction uses $3n_1 + 3n_2 + 1$
multiplication gates and $5n_1 + 5n_2 + 9$ constraints.

\subsection{ModSolve Function}
\label{sec:modsolve}

Let $\text{Decomp}(v, n) \rightarrow \vec{a}, \vec{b}$ be a a
function that decomposes $v$ into its binary representation in $n$
bits, such that $\langle \vec{a}, \vec{2}^n \rangle = v$, $\vec{a}
\circ \vec{b} = \vec{0}^n$, and $\vec{a} - \vec{b} - \vec{1} =
\vec{0}^n$, if such a representation is possible, or $a_1 = v, a_2 =
\ldots = a_n = 0, b_1 = 0, b_2 = \ldots = b_n = -1$ otherwise.
Algorithm~\ref{alg:modsolve} shows the ModSolve function that calculates
the wire values that solve the equations in the previous section.

\begin{algorithm}[h]
  \caption{ModSolve Function for Base Proof-of-Mod.}
  \label{alg:modsolve}
  \KwIn{$v \in \mathbb{Z}_p$, $s \in \mathbb{Z}_{p_0}$, $p_0, p \in \mathbb{N}$, such that $v = s \mod p$.}
  \KwOut{$\vec{a}, \vec{b}, \vec{c}$ that solve the equations in Figure~\ref{fig:pom_eqns}.}
  $n_1 \leftarrow \lceil \log_2 p \rceil$\;
  $n_2 \leftarrow \lceil \log_2 \lfloor p_0 / p \rfloor \rceil $\;
  $k \leftarrow (s - v) \cdot p^{-1} \mod p_0 $\;
  $\vec{a}_1, \vec{b}_1 \leftarrow \text{Decomp} (k) $\;
  $\vec{a}_2, \vec{b}_2 \leftarrow \text{Decomp} (q - k) $\;
  $\vec{a}_3, \vec{b}_3 \leftarrow \text{Decomp} (v) $\;
  $\vec{a}_4, \vec{b}_4 \leftarrow \text{Decomp} (p - v - 1) $\;
  $\vec{a}_5, \vec{b}_5 \leftarrow \text{Decomp} (q - k - 1) $\;
  $\vec{a}_6, \vec{b}_6 \leftarrow \text{Decomp} (t - v - 1) $\;
  $a_7 \leftarrow a_{5,1} - b_{5,1} - 1 $\;
  $b_7 \leftarrow a_{6,1} - b_{6,1} - 1 $\;
  $\vec{a} \leftarrow \vec{a}_1 || \vec{a}_6 || a_7 $\;
  $\vec{b} \leftarrow \vec{b}_1 || \vec{b}_6 || b_7 $\;
  $\vec{c} \leftarrow \vec{0}^{3n_1 + 3n_2 + 1} $\;
  \KwRet{$\vec{a}, \vec{b}, \vec{c}$}\;
\end{algorithm}

\section{Non-Interactive Proof-of-Mod Implementation}
\label{app:ni_pom_details}

First,
we assume a family of secure has functions $\mathcal{H}$, from which
we select $H \in \mathcal{H}$ for use in the protocol below. Next we
assume that the Fiat-Shamir heuristic is applied to the circuit proof
protocol using the same family of hash functions, and that the first
message from the prover to the verifier includes vector commitments
to the wires of the circuit as $A, B, C$. Let $tr<\Pi_{CKT}>
(\vec{V}, \mathbf{CKT}, \vec{v}, \vec{a}, \vec{b}, \vec{c},
\vec{r_v}, A, B, C, r_a, r_b, r_c) \rightarrow \pi$ be a
polynomial-time function that produces an accepting transcript $\pi$
for the resulting non-interactive proof that does not include $A, B,
C$, and let $V_{CKT}(\vec{V}, CKT, A, B, C, \pi_{ckt})$ be a function
that verifies transcript $\pi_{ckt}$ for $\Pi_{CKT}$. We can then
construct a non-interactive proof-of-mod protocol as follows using
the procedures shown in Algorithm~\ref{alg:ni_prover} and
Algorithm~\ref{alg:ni_verifier} for the prover and verifier
respectively.

\begin{algorithm*}[h]
  \caption{Non-interactive Prover for Proof-of-Mod.}
  \label{alg:ni_prover}
  \KwIn{
    $p_0, p \in \mathbb{N}$,
    $a_0, \ldots, a_m \in \mathbb{Z}_{p_0}$,
    $v = \sum_{i = 0}^{m} a_i {p_0}^i \mod p \in \mathbb{Z}_p$,
    $r_v, r_0, \ldots, r_m \in \mathcal{R}$,
    $V = \commit{v; r_v}$,
    $A_0, \ldots, A_m = \commit{a_0; r_0}, \ldots, \commit{a_m; r_m}$.
    }
  \KwOut{Proof $\pi$.}
$\vec{a}, \vec{b}, \vec{c} \leftarrow \text{EModSolve}(v, a_0, \ldots, a_m, p_0, p)$\;
$r_a, r_b, r_c \xleftarrow{\$} \mathcal{R}$\;
$A, B, C \leftarrow \commit{\vec{a}; r_a}, \commit{\vec{b}; r_b}, \commit{\vec{c}; r_c}$\;
$z \leftarrow H(A, B, C)$\;
$\mathbf{CKT} \leftarrow \text{EModCkt}(p_0, p, z)$\;
$\pi_{ckt} \leftarrow tr< \Pi_{CKT} >((V, A_0, \ldots, A_m), \mathbf{CKT}, 
(v, a_0, \ldots, a_m), \vec{a}, \vec{b}, \vec{c}, (r_v, r_0, \ldots,
  r_m), A, B, C, r_a, r_b, r_c)$\;
  \KwRet{$\pi = (A, B, C, \pi_{ckt})$}\;
\end{algorithm*}

\begin{algorithm*}[h]
  \caption{Non-interactive Verifier for Proof-of-Mod.}
  \label{alg:ni_verifier}
  \KwIn{
    $p_0, p \in \mathbb{N}$,
    Commitments $V, A_0, \ldots, A_m$,
    Proof $\pi$.
    }
  \KwOut{Accept or Reject.}
  $A, B, C, \pi_{ckt} \leftarrow \text{Parse}(\pi)$\;
  $z \leftarrow H(A, B, C)$\;
  $\mathbf{CKT} \leftarrow \text{EModCkt}(p_0, p, z)$\;
  \uIf{$V_{CKT}((V, A_0, \ldots, A_m), \mathbf{CKT}, A, B, C, \pi_{ckt}) = \text{Accept}$}{
    \KwRet{Accept}\;
  } \Else{
    \KwRet{Reject}\;
  }
\end{algorithm*}

\onecolumn
This page is needed for a bug in long table. Please delete this page.
\pagebreak

\subsection{Ethereum Staking Data}
\label{sec:eth_data}

\begin{center}
\begin{longtable*}{l|r|r|r|r|r}
\multicolumn{1}{c|}{Entity} & \multicolumn{1}{|c|}{ETH Staked} & \multicolumn{1}{|c|}{Marketshare} & \multicolumn{1}{|c|}{Virtual Shares} & \multicolumn{1}{|c|}{WRSS Bits} & \multicolumn{1}{|c}{WRSS Shares} \\ \hline \hline
Lido & 9,370,076 & 31.7588 & 1588 & 1611 & 14 \\ \hline
Coinbase & 4,329,345 & 14.6738 & 734 & 749 & 7 \\ \hline
Binance & 1,162,816 & 3.9412 & 197 & 208 & 2 \\ \hline
Kiln & 989,472 & 3.3537 & 168 & 178 & 2 \\ \hline
Figment & 937,632 & 3.1780 & 159 & 169 & 2 \\ \hline
Rocket Pool & 853,934 & 2.8943 & 145 & 155 & 2 \\ \hline
Kraken & 825,601 & 2.7983 & 140 & 150 & 2 \\ \hline
Staked.us & 666,492 & 2.2590 & 113 & 123 & 1 \\ \hline
OKX & 587,649 & 1.9918 & 100 & 109 & 1 \\ \hline
Bitcoin Suisse & 541,942 & 1.8368 & 92 & 102 & 1 \\ \hline
Upbit & 374,912 & 1.2707 & 64 & 73 & 1 \\ \hline
stakefish & 374,752 & 1.2702 & 64 & 73 & 1 \\ \hline
Mantle & 338,048 & 1.1458 & 57 & 67 & 1 \\ \hline
DARMA Capital & 326,112 & 1.1053 & 55 & 65 & 1 \\ \hline
Blockdaemon & 263,712 & 0.8938 & 45 & 54 & 1 \\ \hline
Frax Finance & 231,488 & 0.7846 & 39 & 49 & 1 \\ \hline
P2P.org & 230,816 & 0.7823 & 39 & 48 & 1 \\ \hline
Swell & 182,688 & 0.6192 & 31 & 40 & 1 \\ \hline
ether.fi & 167,844 & 0.5689 & 28 & 38 & 1 \\ \hline
Daniel Wang & 151,648 & 0.5140 & 26 & 35 & 1 \\ \hline
CoinSpot & 137,664 & 0.4666 & 23 & 33 & 1 \\ \hline
Diva (Pre-launch) & 125,443 & 0.4252 & 21 & 30 & 1 \\ \hline
Octant & 100,000 & 0.3389 & 17 & 26 & 1 \\ \hline
Stader & 82,565 & 0.2798 & 14 & 23 & 1 \\ \hline
Stakewise & 76,672 & 0.2599 & 13 & 22 & 1 \\ \hline
MyEtherWallet & 63,235 & 0.2143 & 11 & 20 & 1 \\ \hline
XHash & 60,800 & 0.2061 & 10 & 19 & 1 \\ \hline
imToken & 60,128 & 0.2038 & 10 & 19 & 1 \\ \hline
Bitstamp & 52,032 & 0.1764 & 9 & 18 & 1 \\ \hline
Revolut & 45,152 & 0.1530 & 8 & 17 & 1 \\ \hline
Gate.io & 41,536 & 0.1408 & 7 & 16 & 1 \\ \hline
StakeHound & 37,504 & 0.1271 & 6 & 15 & 1 \\ \hline
Liquid Collective & 35,520 & 0.1204 & 6 & 15 & 1 \\ \hline
Poloniex & 31,072 & 0.1053 & 5 & 14 & 1 \\ \hline
RockX & 29,216 & 0.0990 & 5 & 14 & 1 \\ \hline
KuCoin & 26,464 & 0.0897 & 4 & 14 & 1 \\ \hline
BlockFi & 26,112 & 0.0885 & 4 & 13 & 1 \\ \hline
Bitfinex & 24,387 & 0.0827 & 4 & 13 & 1 \\ \hline
Stkr (Ankr) & 24,104 & 0.0817 & 4 & 13 & 1 \\ \hline
Everstake & 22,784 & 0.0772 & 4 & 13 & 1 \\ \hline
Harbour & 20,288 & 0.0688 & 3 & 12 & 1 \\ \hline
arthapala.eth & 18,816 & 0.0638 & 3 & 12 & 1 \\ \hline
Consensys & 18,240 & 0.0618 & 3 & 12 & 1 \\ \hline
Bake & 17,408 & 0.0590 & 3 & 12 & 1 \\ \hline
WEX Exchange & 16,000 & 0.0542 & 3 & 12 & 1 \\ \hline
conurtrol.eth & 15,840 & 0.0537 & 3 & 12 & 1 \\ \hline
StakeWise & 13,824 & 0.0469 & 2 & 11 & 1 \\ \hline
Node DAO & 12,288 & 0.0416 & 2 & 11 & 1 \\ \hline
Bitpie & 11,456 & 0.0388 & 2 & 11 & 1 \\ \hline
HTX & 10,368 & 0.0351 & 2 & 11 & 1 \\ \hline
Mercado Bitcoin & 10,368 & 0.0351 & 2 & 11 & 1 \\ \hline
Celsius & 8,993 & 0.0305 & 2 & 11 & 1 \\ \hline
bitshameddesk.eth & 8,640 & 0.0293 & 1 & 10 & 1 \\ \hline
BTC-e & 8,416 & 0.0285 & 1 & 10 & 1 \\ \hline
honoraryape.eth & 8,000 & 0.0271 & 1 & 10 & 1 \\ \hline
Taylor Gerring & 8,000 & 0.0271 & 1 & 10 & 1 \\ \hline
was.eth & 7,680 & 0.0260 & 1 & 10 & 1 \\ \hline
Paxos & 7,680 & 0.0260 & 1 & 10 & 1 \\ \hline
Swell (Pre-launch) & 7,188 & 0.0244 & 1 & 10 & 1 \\ \hline
CoinDCX & 7,136 & 0.0242 & 1 & 10 & 1 \\ \hline
Vitalik Buterin & 6,976 & 0.0236 & 1 & 10 & 1 \\ \hline
Tranchess & 6,976 & 0.0236 & 1 & 10 & 1 \\ \hline
cryptostake.com & 6,592 & 0.0223 & 1 & 10 & 1 \\ \hline
guccilorian.eth & 5,088 & 0.0172 &  &  &  \\ \hline
for.eth & 5,056 & 0.0171 &  &  &  \\ \hline
EPotter & 4,640 & 0.0157 &  &  &  \\ \hline
Ebunker & 4,640 & 0.0157 &  &  &  \\ \hline
Sigma Prime Team & 4,608 & 0.0156 &  &  &  \\ \hline
Nimbus Team & 4,608 & 0.0156 &  &  &  \\ \hline
Teku Team & 4,608 & 0.0156 &  &  &  \\ \hline
Prysm Team & 4,608 & 0.0156 &  &  &  \\ \hline
SharedStake & 3,584 & 0.0121 &  &  &  \\ \hline
StaFi & 3,456 & 0.0117 &  &  &  \\ \hline
Redacted Pirex & 3,378 & 0.0114 &  &  &  \\ \hline
MintDice.com & 2,688 & 0.0091 &  &  &  \\ \hline
Bitget & 2,688 & 0.0091 &  &  &  \\ \hline
Bifrost & 1,632 & 0.0055 &  &  &  \\ \hline
DxPool & 992 & 0.0034 &  &  &  \\ \hline
Blox Staking & 992 & 0.0034 &  &  &  \\ \hline
staked.finance & 960 & 0.0033 &  &  &  \\ \hline
Flipside & 512 & 0.0017 &  &  &  \\ \hline
neukind.com & 480 & 0.0016 &  &  &  \\ \hline
Uphold & 448 & 0.0015 &  &  &  \\ \hline
ClayStack & 128 & 0.0004 &  &  &  \\ \hline
SenseiNode & 96 & 0.0003 &  &  &  \\ \hline
Other Solo Stakers & 151,143 & 0.5123 &  &  &  \\ \hline
Unidentified & 5,026,342 & 17.0362 &  &  & 
\end{longtable*}
\end{center}

\end{document}